\newcommand{\nc}{\newcommand}
\newcommand{\DMO}{\DeclareMathOperator}
\DeclareMathAlphabet\mathbfcal{OMS}{cmsy}{b}{n}
\nc{\MS}{\mathcal{S}}
\nc{\MP}{\mathcal{P}}
\nc{\MR}{\mathcal{R}}
\nc{\cM}{\mathcal{M}}
\nc{\cS}{\mathcal{S}}
\nc{\cI}{\mathcal{I}}
\nc{\cA}{\mathcal{A}}
\nc{\tcA}{\tilde{\cA}}
\nc{\MZ}{\mathcal{Z}}
\DMO{\Binom}{Binom}
\newcommand{\E}{\mathbb{E}}
\DMO{\Var}{Var}
\newcommand{\bX}{\mathbf{X}}
\nc{\tbx}{\tilde{\bx}}
\nc{\tbX}{\tilde{\bX}}
\nc{\tZ}{\tilde{Z}}
\nc{\tz}{\tilde{z}}
\newcommand{\bU}{\mathbf{U}}
\nc{\tbU}{\tilde{\bU}}
\newcommand{\bT}{\mathbf{T}}
\nc{\tbT}{\tilde{\bT}}
\newcommand{\bD}{\mathbf{D}}
\nc{\tbD}{\tilde{\bD}}
\newcommand{\bx}{\mathbf{x}}
\newcommand{\R}{\mathbb{R}}
\newcommand{\N}{\mathbb{N}}
\nc{\BN}{\mathbb{N}}
\nc{\BZ}{\mathbb{Z}}
\newcommand{\cE}{\mathcal{E}}
\newcommand{\ind}{\mathbf{1}}
\newcommand{\bA}{\mathbf{A}}
\nc{\tbA}{\tilde{\bA}}
\newcommand{\eps}{\epsilon}
\newcommand{\cval}{\mathrm{cval}}
\newcommand{\val}{\mathrm{val}}
\newcommand{\indp}{\mathrm{indep}}
\newcommand{\td}{\tilde{d}}
\newcommand{\clip}{\mathrm{clip}}
\newtheorem{theorem}{Theorem}
\newtheorem{lemma}[theorem]{Lemma}
\newtheorem{definition}[theorem]{Definition}
\newcommand{\Inf}{\mathsf{Inf}}
\title{Hardness of Approximating Bounded-Degree Max 2-CSP \\ and Independent Set on $k$-Claw-Free Graphs}
\date{July 2023}
\newif\ifarxiv
\author{ {Euiwoong Lee\footnote{Supported in part by NSF grant CCF-2236669 and Google.}} \\ University of Michigan\\ \texttt{euiwoong@umich.edu} \and {Pasin Manurangsi}\\ Google Research\\ \texttt{pasin@google.com} }
\author{}
\begin{document}

\maketitle

\begin{abstract}
We consider the question of approximating Max 2-CSP where each variable appears in at most $d$ constraints (but with possibly arbitrarily large alphabet). There is a simple $(\frac{d+1}{2})$-approximation algorithm for the problem. We prove the following results for any sufficiently large $d$:
\begin{itemize}
\item Assuming the Unique Games Conjecture (UGC), it is NP-hard (under randomized reduction) to approximate this problem to within a factor of $\left(\frac{d}{2} - o(d)\right)$.
\item It is NP-hard (under randomized reduction)  to approximate the problem to within a factor of $\left(\frac{d}{3} - o(d)\right)$.
\end{itemize}
Thanks to a known connection~\cite{DvorakFRR23}, we establish the following hardness results for approximating Maximum Independent Set on $k$-claw-free graphs:
\begin{itemize}
\item Assuming the Unique Games Conjecture (UGC), it is NP-hard (under randomized reduction) to approximate this problem to within a factor of $\left(\frac{k}{4} - o(k)\right)$.
\item It is NP-hard (under randomized reduction)  to approximate the problem to within a factor of $\left(\frac{k}{3 + 2\sqrt{2}} - o(k)\right) \geq \left(\frac{k}{5.829} - o(k)\right)$.
\end{itemize}
In comparison, known approximation algorithms achieve $\left(\frac{k}{2} - o(k)\right)$-approximation in polynomial time~\cite{Neuwohner21,ThieryW23} and $(\frac{k}{3} + o(k))$-approximation in quasi-polynomial time~\cite{CyganGM13}.
\end{abstract}


\section{Introduction}

An instance of Max 2-CSP consists of variables--each of which can take a value of an alphabet--together with constraints, each involving a pair of variables. The goal is to find an assignment to the variables that satisfies as many constraints as possible. The Max 2-CSP problem is a cornerstone of the field of hardness of approximation as it\footnote{Or more precisely, its special case known as \emph{Label Cover} or \emph{Projection Games}} is often used as a starting point in hardness of approximation reductions. When the constraints are restricted to certain predicates--such as 3SAT or Max-Cut, tight hardness of approximation results are known through a series of influential work (e.g.~\cite{Hastad01,KhotKMO07}). In fact, it is known that a certain semi-definite program relaxation provides essentially the best approximation ratio achievable in polynomial time~\cite{Raghavendra08}. Meanwhile, besides the predicate, there are also other parameters that can affect the approximation ratio. One of which is the \emph{(maximum) degree} of the instance $d$, defined as the maximum number of constraints that a variable appears in. A number of previous studies have investigated how the degree affects the approximation ratio (e.g.~\cite{Hastad00,Trevisan01,Laekhanukit14,BarakMORRSTVWW15}), partly because, as we will see in more detail below, it affects the (in)approximation ratio of subsequent problems in hardness reductions. In this work, we focus on determining the approximation ratio in terms of this parameter $d$ alone (regardless of the predicate or other parameters) and ask: 
\begin{center}
\emph{What is the best possible approximation ratio for Max 2-CSP in terms of $d$?}
\end{center}
Regarding this question, there is a simple $(\frac{d+1}{2})$-approximation algorithm for the problem (see \Cref{sec:approx-algo}). On the hardness front, Laekhanukit~\cite{Laekhanukit14} proved NP-hardness (under randomized reduction) with a factor of $\Omega(d / \log d)$ for any sufficiently large $d$. Furthermore, under a less standard ``Strongish Planted Clique Hypothesis''\footnote{The Strongish Planted Clique Hypothesis states that no $n^{o(\log n)}$-time algorithm can distinguish between a $G(n, 1/2)$ random graph and one in which a clique of size $n^c$ is planted for some absolute constant $c > 0$.}, a hardness of a factor $\Omega(d)$ is known but without any explicit constant in the inapproximability factor\footnote{In fact, this hardness only states that, for each $c_1 > 0$, there exists $c_2 > 0$ such that no $O(n^{c_1})$-time algorithm achieves $c_2d$-approximation ratio. In other words, it does not rule out e.g. $n^{O(\log^* n)}$ time algorithm from achieving $o(d)$-approximation ratio.}~\cite{ManurangsiRS21}.

\paragraph{Maximum Independent Set on $k$-Claw-Free Graphs.}
While the Maximum Independent Set problem is well known to be NP-hard to approximate to within a factor of $n^{1 - \eps}$ where $n$ is the number of vertices~\cite{Hastad96,Zuckerman07}, there are multiple classes of graphs for which this can be significantly improved upon. One such class is that of \emph{$k$-claw-free} graphs. Recall that a $k$-claw (i.e. $K_{1, k}$) is the star graph with a center vertex connecting to $k$ other vertices (where there are no edges between these $k$ vertices). A graph is $k$-claw-free if it does not contain a $k$-claw as an \emph{induced} subgraph. The classic local search algorithm of Berman~\cite{Berman00} achieves $(\frac{k}{2} + \eps)$-approximation in polynomial time for any constant $\eps > 0$. Recently, this has been improved by~\cite{Neuwohner21,ThieryW23} to achieve a slightly-better-than-$(k/2)$ approximation ratio; in particular, \cite{ThieryW23} achieves approximation ratio of $(\frac{k}{2} - \frac{1}{3} + o(1))$ where $o(1)$ is a term that converges to 0 as $k \to \infty$. Meanwhile, in quasi-polynomial (i.e. $n^{O(\log n)}$) time, this ratio can be improved\footnote{We only discuss the unweighted case in our paper as our hardness results apply to this case; for the weight case, it is not known how to achieve $\left(\frac{k}{3} + o(k) \right)$-approximation in quasi-polynomial time. Please refer to \cite{Neuwohner-arxiv} for the best approximation algorithms known for the weighted case.} to $\left(\frac{k}{3} + \eps\right)$ for any constant $\eps > 0$~\cite{CyganGM13}. These algorithms are based on local search approaches. Meanwhile, several works have also investigated the power of LP/SDP relaxations of the problem: Chudnovsky and Seymour~\cite{CHUDNOVSKY2010560} showed that a standard SDP relaxation yields 2-approximation when $k = 3$, but a recent work by Chalermsook et al.~\cite{chalermsook2023independent} shows large integrality gaps for $k > 3$.

On the hardness of approximation front, Hazan et al.~\cite{HazanSS06} proved that the problem is NP-hard to approximate to within $\Omega(k/\log k)$ factor. In a recent work, Dvorak et al.~\cite{DvorakFRR23} observed that the classic FGLSS reduction~\cite{FeigeGLSS96} provides an approximation-preserving reduction from Max 2-CSP with maximum degree $d$ to maximum independent set in $2d$-claw-free graphs. This reduction, together with the aforementioned hardness from \cite{Laekhanukit14}, gives an alternative NP-hardness proof with a similar inapproximability factor. Meanwhile, plugging this to the other aforesaid result of~\cite{ManurangsiRS21} implies that no polynomial-time algorithm can achieves $o(k)$ approximation ratio (albeit without an explicit constant again) under the Strongish Planted Clique Hypothesis. \cite{DvorakFRR23} also consider parameterization based on the independent set size and prove several hardness results in that setting; we defer the discussion on this to \Cref{sec:fpt}.

\subsection{Our Results}

\paragraph{Bounded-Degree Max 2-CSP.}

Our main contribution is a nearly tight hardness of approximation result for Max 2-CSP in terms of $d$ assuming the Unique Games Conjecture (UGC)~\cite{Khot02}:

\begin{theorem} \label{thm:ug-2csp-main}
Assuming the UGC, for any $\eps \in (0, 1/2)$, there exists $d_0 \in \N$ such that the following holds for every positive integer $d \geq d_0$: Unless NP = BPP, there is no polynomial-time $d(1/2 - \eps)$-approximation algorithm for $d$-bounded-degree Max 2-CSP.  
\end{theorem}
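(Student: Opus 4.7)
The plan is to exhibit a simple gap-preserving reduction from Unique Games (UG) to bounded-degree Max 2-CSP, padding the UG instance with trivially satisfied binary constraints so that the completeness-to-soundness ratio tends exactly to $d/2$.

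First, I would invoke the UGC to obtain, for any $\eta > 0$, a UG instance $\Phi$ over alphabet $[q]$ with completeness $1 - \eta$ and soundness $\eta$. A standard regularization step (bipartite double cover plus vertex duplication, or equivalently stating UGC on regular instances) lets me assume the constraint graph of $\Phi$ is $(d-2)$-regular on $n$ variables, giving $m_0 = (d-2)n/2$ UG constraints. I would then build the Max 2-CSP $\Psi$ by taking all constraints of $\Phi$ together with an additional $2$-regular graph $H$ on the variables (for example, a Hamilton cycle), each edge of $H$ being labeled with the trivially-true binary predicate. The resulting instance has $m = dn/2$ constraints over the same alphabet $[q]$, and every variable has degree exactly $d$.

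Second, I would verify the gap by direct calculation. In the completeness case, any UG solution satisfies at least $(1-\eta)m_0$ UG constraints and all $n$ padding constraints, for a satisfied-fraction of at least $1 - \eta(d-2)/d$. In the soundness case, every assignment satisfies at most $\eta m_0$ UG constraints (by UG soundness) and trivially all $n$ padding constraints, giving a satisfied-fraction of at most $2/d + \eta(d-2)/d$. The completeness-to-soundness ratio $(d - \eta(d-2))/(2 + \eta(d-2))$ tends to $d/2$ as $\eta \to 0$; a quick calculation shows that taking $\eta$ of order $\eps/d$ makes this ratio at least $d(1/2 - \eps)$ for every $d \geq d_0(\eps)$.

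The main potential pitfall is the UG regularization: we need the constraint graph to be (nearly) $(d-2)$-regular without damaging the completeness or soundness by more than $o(1)$. This is by-now standard, but care is needed so that the regularization procedure does not introduce constraints that enlarge the soundness beyond $2/d$. A minor technicality: if the convention of Max 2-CSP forbids the trivially-true predicate, one may replace it with any binary predicate that holds on all but a $1/q^2$ fraction of pairs in $[q]^2$, which shifts both parameters only by lower-order terms and leaves the $d(1/2 - \eps)$ conclusion intact.
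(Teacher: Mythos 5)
Your plan has a fatal gap at the step where you ``regularize'' the Unique Games instance to be $(d-2)$-regular. The UGC provides hard instances whose constraint graph has unbounded degree (in general polynomial in $n$); there is no standard step that reduces the degree to a constant $d-2$ while keeping soundness $\eta$ small. Worse, it is \emph{impossible} in principle: since UG constraints are bijections, one can always satisfy every edge of a spanning forest, so any connected $(d-2)$-regular UG instance on $n$ vertices has value at least $\frac{n-1}{(d-2)n/2} \approx \frac{2}{d-2}$. Hence you cannot take $\eta \to 0$ once the instance is $(d-2)$-regular; the best you could hope for is $\eta \approx \frac{2}{d-2}$, and substituting this into your ratio $\frac{d-\eta(d-2)}{2+\eta(d-2)}$ gives only $\approx \frac{d-2}{4}$, not $\frac{d}{2}$. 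Your ``main potential pitfall'' is indeed the pitfall, and it is not surmountable by a regularization lemma — it is the core difficulty of the problem.

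The paper's approach is fundamentally different. It first uses UGC (via the KKMO dictatorship-test machinery, Theorem~\ref{thm:ug-2csp-unbounded-deg}) to produce an unbounded-degree 2-CSP over alphabet $[R]$ with completeness $1 - \zeta$ and soundness $O(\log^2 R / R)$, i.e. a gap that is nearly \emph{polynomial in the alphabet size}. It then sparsifies by randomly subsampling edges down to degree $d$ (Theorem~\ref{thm:subsampling-main}), with a Chernoff-plus-union-bound soundness analysis over all $R^{|A'| + |B'|}$ assignments. The crucial point — emphasized in the technical overview — is that because the starting soundness $\gamma$ is much smaller than the post-subsampling target soundness $\gamma' = \Theta(1/d)$, the Chernoff tail is $(\gamma/\gamma')^{\gamma' d n/2}$ rather than merely $\exp(-\Omega(\gamma' d n))$; this gains the $\log R$ factor needed to beat the union bound with $\gamma' \approx \frac{2}{d}$ instead of $\gamma' = \Omega(\frac{\log R}{d})$. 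Your proposal contains neither the alphabet-amplified gap nor the subsampling analysis, which are exactly the two ingredients that make the constant $1/2$ achievable.
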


As stated earlier, there is a simple $(\frac{d+1}{2})$-approximation algorithm and thus our result is within a factor of $1 + o(1)$ of this upper bound (as $d \to \infty$). To the best of our knowledge, this is also the first $\Omega(d)$ hardness of approximation result with an explicit constant for the problem (under any assumption). For NP-hardness, we prove a slightly weaker result where the factor is instead $\approx d/3$:

\begin{theorem} \label{thm:np-2csp-main}
For any $\eps \in (0, 1/3)$, there exists $d_0 \in \N$ such that the following holds for every positive integer $d \geq d_0$: Unless NP = BPP, there is no polynomial-time $d(1/3 - \eps)$-approximation algorithm for $d$-bounded-degree Max 2-CSP.  
\end{theorem}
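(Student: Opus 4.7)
The plan is to follow the same overall strategy as Theorem~\ref{thm:ug-2csp-main}, but to replace the Unique Games starting point with an NP-hard gap instance of Label Cover obtained from the PCP theorem (together with parallel repetition, or a smoothness-style refinement such as Moshkovitz--Raz). Because projection constraints in Label Cover are strictly weaker than bijective Unique Games constraints, the achievable inapproximability ratio is expected to worsen from $d/2$ to $d/3$.

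Concretely, the steps I envision are: (i) invoke NP-hardness of a bipartite Label Cover instance with completeness $1$, small soundness $\gamma$, alphabets $[R]$ and $[L]$ polynomial in $1/\gamma$, and projection constraints $\pi_e \colon [R] \to [L]$, ideally with the underlying bipartite graph close to $d$-regular; (ii) transform this into a $d$-bounded-degree Max 2-CSP via the natural encoding of Label Cover as a 2-CSP, augmenting with consistency constraints among copies of the same variable (arranged along a spectral expander of appropriate regularity) so that every variable has degree exactly $d$; (iii) verify completeness, namely that a fully satisfying Label Cover assignment lifts to a Max 2-CSP assignment satisfying essentially every constraint; and (iv) verify soundness, namely that any Max 2-CSP assignment satisfying more than a $(1 + o(1)) \cdot 3/d$ fraction of the constraints can be rounded back to a Label Cover assignment satisfying significantly more than $\gamma$ fraction of its edges, contradicting the initial soundness.

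The main obstacle is the tight soundness argument in step (iv), especially nailing the explicit constant $\tfrac{1}{3}$. Under UGC, one exploits the bijective structure of Unique Games constraints---each constraint pins down a unique correct value on one side given the other---and this tightness drives the inapproximability factor to $d/2$. For Label Cover, in contrast, the projections $\pi_e$ are many-to-one, so for any choice of right-side value there are typically several ``correct'' left-side preimages, giving an adversary additional slack. Quantifying this extra slack exactly and balancing it against the consistency budget of the expander (or whatever degree-increasing gadget is used) is the delicate analysis that would produce the factor-$\tfrac{3}{2}$ loss relative to Theorem~\ref{thm:ug-2csp-main}, and hence the explicit constant $\tfrac{1}{3}$. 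A secondary challenge is ensuring that all randomized components (PCP-style hardness, random regular expanders for the consistency gadget) compose into a single randomized polynomial-time reduction, justifying the ``NP~=~BPP'' hypothesis in the statement.
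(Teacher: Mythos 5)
Your proposal departs from the paper's approach in both the degree-reduction mechanism and the source of the constant $1/3$, and I believe it would fail as sketched.

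The paper does \emph{not} use an expander-based consistency gadget. It uses random subsampling of constraints (Theorem~\ref{thm:subsampling-main}): keep each edge of a denser biregular 2-CSP independently with probability roughly $p = d/\tilde d$, then prune the few over-degree vertices. Expander-replacement gadgets (Papadimitriou--Yannakakis style) are the standard route for bounded-degree hardness, but they typically cost at least a logarithmic factor and do not readily give an $\Omega(d)$ gap with an explicit constant; the whole point of the subsampling route, as emphasized in the technical overview, is that a large \emph{polynomial-in-$R$} gap in the starting instance converts into an $\Omega(d)$ gap after sparsification via a Chernoff-plus-union-bound argument.

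More importantly, your explanation for why $1/2$ degrades to $1/3$ is incorrect and masks the key idea. The factor does not come from projections being many-to-one giving the adversary ``slack'' in a rounding step; there is no rounding step at all. The paper starts from Chan's NP-hardness (Theorem~\ref{thm:np-2csp-unbounded-deg}), a bipartite instance with left alphabet $R$, right alphabet $\sqrt R$, and soundness $O(\log R / \sqrt R) = R^{-1/2+o(1)}$, i.e.\ the exponent $\nu$ in ``gap $= R^\nu$'' is only $1/2$ rather than $1 - o(1)$ as under UGC. Plugging $\nu = 1/2$ naively into the union-bound calculation gives a factor of only $d/4$. The extra improvement to $d/3$ comes from an entirely different observation: because the right alphabet has size only $\sqrt R$, the union bound is over at most $R^{|A'|}\cdot R^{|B'|/2}$ assignments rather than $R^{|A'|+|B'|}$, so the Chernoff-exponent budget needed for the union bound is smaller, which is precisely what the parameter $t = 1/2$ in Theorem~\ref{thm:subsampling-main} captures, yielding the bound $\frac{1}{\nu}\bigl(\frac{1}{d_A}+\frac{t}{d_B}\bigr) \approx \frac{3}{d}$ with $d_A = d_B = d$. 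Your proposal never identifies this asymmetry in the union bound, so it would at best reproduce $d/4$, not $d/3$, and even that is not reachable with an expander gadget in place of subsampling. Also, generic Label Cover from parallel repetition with soundness polynomially small in the alphabet has $\nu$ far below $1/2$; you really do need Chan's specific instance here, not an arbitrary PCP.
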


\paragraph{Independent Set in Claw-Free Graphs.}
Leveraging the connection between bounded-degree Max 2-CSP and Maximum Independent Set in claw-free graphs~\cite{DvorakFRR23} discussed above, we arrive at a $\approx k/4$ hardness for the latter, assuming the Unique Games Conjecture.

\begin{theorem} \label{thm:ug-claw-free-main}
Assuming the UGC, for any $\eps \in (0, 1/4)$, there exists $k_0 \in \N$ such that the following holds for every positive integer $k \geq k_0$: Unless NP = BPP, there is no polynomial-time $k(1/4 - \eps)$-approximation algorithm for Maximum Independent Set on $k$-claw-free graphs.  
\end{theorem}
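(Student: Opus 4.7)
The plan is to derive the claw-free inapproximability as an immediate corollary of Theorem~\ref{thm:ug-2csp-main} together with the FGLSS-style approximation-preserving reduction of Dvorak et al.~\cite{DvorakFRR23}, which turns an instance of $d$-bounded-degree Max 2-CSP into an instance of Maximum Independent Set on a $2d$-claw-free graph while preserving the objective value exactly (each vertex of the graph corresponds to a constraint-assignment pair, and independent sets biject with partial assignments that consistently satisfy the chosen constraints; the $2d$-claw-freeness comes from the fact that every vertex touches at most $2d$ constraint-neighbors arising from the two endpoints of its constraint).

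Given $\eps \in (0, 1/4)$, I would set $\eps' := 2\eps \in (0, 1/2)$, let $d_0$ be the threshold guaranteed by Theorem~\ref{thm:ug-2csp-main} for this $\eps'$, and define $k_0 := 2 d_0$. For any $k \geq k_0$, write $d := \lceil k/2 \rceil \geq d_0$ so that the reduction produces a $(2d)$-claw-free graph, which is in particular $k$-claw-free (as $k \leq 2d$, and any $k$-claw is a $(2d)$-claw). Suppose for contradiction that there is a polynomial-time $k(1/4 - \eps)$-approximation algorithm for Maximum Independent Set on $k$-claw-free graphs. Chaining it with the DvorakFRR reduction yields a polynomial-time algorithm for $d$-bounded-degree Max 2-CSP with approximation ratio
\[
k\left(\tfrac{1}{4} - \eps\right) \;\leq\; 2d\left(\tfrac{1}{4} - \eps\right) \;=\; d\left(\tfrac{1}{2} - 2\eps\right) \;=\; d\left(\tfrac{1}{2} - \eps'\right),
\]
contradicting Theorem~\ref{thm:ug-2csp-main} under the UGC (unless NP $=$ BPP).

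There is essentially no main obstacle here: the entire argument is a black-box composition, and all the actual work lives in Theorem~\ref{thm:ug-2csp-main} (the UGC-based hardness for bounded-degree Max 2-CSP). The only mild bookkeeping is the factor-of-two translation between $d$ and $k$ and the choice $\eps' = 2\eps$, and making sure that the reduction of~\cite{DvorakFRR23} is genuinely approximation-preserving (i.e.\ the optimal independent set size equals the maximum number of simultaneously satisfiable constraints, so approximation ratios transfer one-to-one without any additive loss). Once those bookkeeping items are in place, the statement follows.
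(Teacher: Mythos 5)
Your high-level plan matches the paper: chain \Cref{thm:ug-2csp-main} with the FGLSS-style reduction of \Cref{lem:red-fglss}. But the degree/claw bookkeeping contains a genuine error in the rounding direction, and the stated justification for claw-freeness is backwards.

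You set $d := \lceil k/2 \rceil$, so $2d \geq k$. The reduction produces a $2d$-claw-free graph, and you claim this is ``in particular $k$-claw-free (as $k \leq 2d$, and any $k$-claw is a $(2d)$-claw).'' This implication is false. Being $m$-claw-free (no induced $K_{1,m}$) is a \emph{weaker} restriction for larger $m$: if a graph contains an induced $K_{1,m'}$ with $m' \geq m$, then it also contains an induced $K_{1,m}$ (restrict to any $m$ leaves), so $m$-claw-free implies $m'$-claw-free only when $m \leq m'$. Thus a $2d$-claw-free graph is $k$-claw-free exactly when $k \geq 2d$, not when $k \leq 2d$. With your choice $d = \lceil k/2 \rceil$ and odd $k$, you get $2d = k+1 > k$, and the graph need not be $k$-claw-free at all; you are then invoking the hypothetical $k$-claw-free MIS algorithm on an instance outside its promise class. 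The fix is the paper's choice $d := \lfloor k/2 \rfloor$, giving $2d \leq k$ and hence a legitimately $k$-claw-free instance.

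With the correct rounding the inequality you wrote then flips the wrong way: now $k(1/4-\eps) \geq 2d(1/4-\eps)$, so the hypothetical algorithm's guarantee $k(1/4-\eps)$ does not immediately fall below $d(1/2 - 2\eps)$. This is salvageable but requires a small adjustment: since $k \leq 2d+1$, you get $k(1/4-\eps) \leq d(1/2 - 2\eps) + (1/4 - \eps)$, and the additive $(1/4 - \eps)$ can be absorbed into the CSP-side slack (e.g., invoke \Cref{thm:ug-2csp-main} with $\eps' := \eps$ rather than $2\eps$, and choose $k_0$ large enough that $(1/4-\eps)/\lfloor k/2\rfloor < \eps$). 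The paper's own proof is terse about this, but it sets $d = \lfloor k/2 \rfloor$, which is the right direction for the claw-freeness inclusion; your proposal needs both the rounding fix and the corresponding $\eps$ adjustment to go through.
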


Again, this is the first $\Omega(k)$ hardness for the problem with an explicit constant. Furthermore, this is within a factor of $2$ (as $k \to \infty$) of the aforementioned polynomial-time approximation algorithms~\cite{Neuwohner21,ThieryW23} and within a factor of $4/3 + o(1)$ of the quasi-polynomial time approximation algorithm~\cite{CyganGM13}.

For NP-hardness result, we get a slightly weaker factor that is $\approx k/5.829$ instead:

\begin{theorem} \label{thm:np-claw-free-main}
For any $\eps \in \left(0, \frac{1}{3 + 2\sqrt{2}}\right)$, there exists $k_0 \in \N$ such that the following holds for every positive integer $k \geq k_0$: Unless NP = BPP, there is no polynomial-time $k\left(\frac{1}{3 + 2\sqrt{2}} - \eps\right)$-approximation algorithm for Maximum Independent Set on $k$-claw-free graphs.  
\end{theorem}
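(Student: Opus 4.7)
The plan is to derive the claimed claw-free inapproximability from Theorem~\ref{thm:np-2csp-main} via the FGLSS-style reduction of Dvorak et al.~\cite{DvorakFRR23}, paralleling the way Theorem~\ref{thm:ug-claw-free-main} is derived from Theorem~\ref{thm:ug-2csp-main}. That reduction takes a $d$-bounded-degree Max 2-CSP instance and produces a $2d$-claw-free graph while preserving the completeness-to-soundness gap, so plugging in Theorem~\ref{thm:np-2csp-main} as stated gives only a $k/6$ inapproximability with $k = 2d$. The target $k/(3+2\sqrt{2}) \approx k/5.828$ is a mild strengthening of this naive bound, so the improvement must come either from a slightly sharper version of the 2-CSP hardness or from a slightly sharper analysis of the reduction.

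Concretely, I would look for the improvement inside the proof of Theorem~\ref{thm:np-2csp-main}. The clean constant $1/3$ in that theorem is most naturally the endpoint of a one-parameter optimization inside the hardness construction (for instance, the noise rate of a Long Code test, or the balance between the two sides of a bipartite Label Cover being reduced from, or the number of copies used in a degree-reduction gadget). When the downstream target is Maximum Independent Set on $k$-claw-free graphs with $k = 2d$, one should re-optimize the same construction against the objective of maximizing the ratio of the inapproximability factor to $2d$ rather than to $d$. Carrying out that re-optimization should produce a rational function of the internal parameter whose extremum is attained at a critical point satisfying a quadratic equation with $\sqrt{2}$ as its root, yielding the constant $1/(3+2\sqrt{2}) = (\sqrt{2}-1)^2$ for the 2-CSP hardness normalized by the claw number. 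Feeding the resulting strengthened Max 2-CSP hardness into the reduction of \cite{DvorakFRR23} and using $k = 2d$ then gives exactly the claimed bound.

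The main obstacle is this re-optimization step: identifying the right internal parameter inside the proof of Theorem~\ref{thm:np-2csp-main}, verifying that its trade-off curve is genuinely maximized at the quadratic root giving $3 + 2\sqrt{2}$, and checking that no secondary loss (for example, from randomized padding needed to realize arbitrary values of $k$, or from the equality/consistency gadgets used to attain the desired regularity) erodes the constant. Once this is done, the remaining machinery---the FGLSS construction from \cite{DvorakFRR23} together with the standard padding and amplification steps---is routine and mirrors the argument used to pass from Theorem~\ref{thm:ug-2csp-main} to Theorem~\ref{thm:ug-claw-free-main}.
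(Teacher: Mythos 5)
Your proposal misses the key observation that makes the paper's argument work, and the route you propose would not give the claimed constant. You commit early to the statement that the reduction of~\cite{DvorakFRR23} ``takes a $d$-bounded-degree Max 2-CSP instance and produces a $2d$-claw-free graph'' and then propose to make up the lost constant by strengthening the $d$-bounded-degree Max 2-CSP hardness itself. But with $k = 2d$, getting $k/(3+2\sqrt{2})$-hardness for Independent Set would require a $2d/(3+2\sqrt{2}) \approx 0.343\,d > d/3$ hardness for Max 2-CSP, which is stronger than Theorem~\ref{thm:np-2csp-main} and which the paper never proves. There is no internal ``noise rate'' or Long Code parameter in the proof of Theorem~\ref{thm:np-2csp-main} whose re-optimization gives $\sqrt{2}$; the 2-CSP hardness is used as is.

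The actual improvement comes from Lemma~\ref{lem:red-fglss}: the FGLSS reduction, applied to a \emph{bipartite} $(d_A, d_B)$-bounded-degree 2-CSP, produces a $(d_A + d_B)$-claw-free graph rather than a $2\max(d_A,d_B)$-claw-free one. This refinement, together with the fact that Theorem~\ref{thm:subsampling-main} can be run with unequal degree caps on the two sides, is what creates a free parameter to optimize. Because the starting NP-hard instance (Theorem~\ref{thm:np-2csp-unbounded-deg}) has left alphabet size $R$ and right alphabet size $R^{1/2}$, the soundness after subsampling scales like $\frac{1}{d_A} + \frac{1/2}{d_B}$, and minimizing this subject to $d_A + d_B = k$ gives $d_A = \sqrt{2}\,d_B$, whence the soundness is $\approx (3 + 2\sqrt{2})/k$ and the inapproximability factor is $k/(3+2\sqrt{2})$. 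So the $\sqrt{2}$ indeed arises from a quadratic optimization, but the variable being optimized is the degree split in the reduction, not a parameter inside the 2-CSP hardness proof; and the non-trivial step you would have to discover is the refined claw-freeness bound $d_A + d_B$, which your proposal never identifies.
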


\subsection{Technical Overview}
We now briefly (and informally) discuss our techniques. Perhaps surprisingly, we use the same strategy as in previous work: sparsify a dense(er) 2-CSP instance by randomly sampling its constraints. This strategy--originated in~\cite{Trevisan01}--has been used in many subsequent papers on the topic (e.g. \cite{GoldreichS06,Laekhanukit14,DinitzKR16,Manurangsi19}). As we will elaborate more below, the main ``twist'' in our work is that, instead of starting with Max 2-CSP hardness with a gap roughly similar to the desired gap after sampling, we start with Max 2-CSP hardness with a much larger gap. 

To discuss this in more detail, let us first recall the standard subsampling procedure. We start with a 2-CSP instance $\Pi$ and produces $\Pi'$ as follows: (i) keep each edge in $\Pi$ with probability $p$ and (ii) remove edges until the every vertex has degree at most $d$. For simplicity of presentation, let us assume that the constraint graph in $\Pi$ is $\td$-regular. In this case, we can let $p = d/\td$. The completeness of the reduction is obvious: if $\Pi$ is fully satisfiable\footnote{Again, this is for simplicity; in the actual reduction, we only have almost satisfiability.}, then $\Pi'$ is also fully satisfiable.

The main challenge is in analyzing the soundness. Again, suppose for simplicity that we did not have to apply step (ii). Suppose that any assignment satisfies less than $\gamma$ fraction of constraints in $\Pi$. What can we say about $\Pi'$?

A standard soundness argument here is to use a concentration bound to show that, for some $\gamma' > \gamma$ and any fixed assignment $\psi$, the probability that $\psi$ satisfies more than $\gamma'$ fraction of constraints in $\Pi'$ is at most $q$. Then, using a union bound over all assignments, one arrive at a conclusion that no assignment satisfies more than $\gamma'$ fraction of constraints in $\Pi'$. This gives a gap of $\gamma'$. Recall that we want $\gamma' = \Omega(1)/d$. 
 Note also that there are $R^n$ assignments, where $R$ denote the alphabet size and $n$ denote the number of variables. Therefore, we need $q \ll R^{-n}$ for this argument to work. Meanwhile, when $\gamma = (\gamma')^{\omega(1)}$ the multiplicative Chernoff bound gives
 \begin{align*}
 q \leq O\left(\frac{\gamma'}{\gamma}\right)^{\gamma' d n / 2} = (1/\gamma)^{-\gamma' dn/2 \cdot (1-o(1))}.
 \end{align*}
Comparing this with the required $q \ll R^{-n}$, it suffices for us to take $\gamma' = \frac{2}{d} \cdot \log_{1/\gamma}R \cdot (1 + o(1))$. Putting it differently, if we start with a hardness for Max 2-CSP with a gap of $1/\gamma = R^\nu$, then we end up with a gap of $\frac{d}{2} \cdot \nu \cdot (1 - o(1))$. Under the UGC, we show that such a hardness can be proved for $\nu = 1 - o(1)$. (See \Cref{sec:ug-2csp}.) This immediately yields \Cref{thm:ug-2csp-main}.

 It is crucial to point out that we require $\gamma \ll \gamma'$ as otherwise, if $\gamma' = \Theta(\gamma)$, the bound would only be $\exp\left(-\Omega\left(\gamma' d n\right)\right)$ which would require us to take $\gamma' = \Omega\left(\frac{\log R}{d}\right)$. This $\log R$ factor is essentially what differentiates us from previous work on similar topics, such as \cite{Laekhanukit14}.

\paragraph{Optimizing Parameters for NP-hardness.} For NP-hardness, we start with the NP-hardness result of \cite{Chan16} where $\gamma = R^{1/2 - o(1)}$ or $\nu = 1/2 - o(1)$. If we were to plug this into the above argument directly, we would get a gap of only $\frac{d}{4} \cdot (1 - o(1))$. We are able to get a better gap of $\frac{d}{3} \cdot (1 - o(1))$ by observing that the instance of \cite{Chan16} is bipartite and has RHS alphabet of size only $\sqrt{R}$. This allows us to use a union bound on only $R^{3n/4}$ assignments (instead of $R^n$), which improves the inapproximability ratio as claimed.

\paragraph{Independent Set on $k$-Claw-Free Graphs.} For UGC-hardness of of Maximum Independent Set on $k$-claw-free graphs, we can combine the UGC-hardness for Max 2-CSP with bounded degree (\Cref{thm:ug-2csp-main}) together with the aforementioned connection from \cite{DvorakFRR23}, which immediately yields \Cref{thm:ug-claw-free-main}. As for NP-hardness, the same strategy only gives us $\frac{k}{6} \cdot (1 - o(1))$ inapproximability factor. To improve this, we observe that the observation in \cite{DvorakFRR23} can be further refined when the graph is bipartite and the maximum degree on each sides are different (\Cref{lem:red-fglss}). By balancing these degree parameters (namely letting the LHS degree being $\approx \sqrt{2}$ times that of the RHS), we arrive at the claimed $\frac{k}{3 + 2\sqrt{2}} \cdot (1 - o(1))$ hardness factor.

\subsection{Other Related Work}

Maximum Independent Set on $k$-claw-free graphs is closely related to many other important problems in literature. For example, it contains Maximum Independent Set on bounded-degree graphs (where the maximum degree is at most $k$) as a special case. It turns out that the latter is easier: a $\tilde{O}(k/\log^2 k)$-approximation algorithm~\cite{BansalGG18} is known and this is essentially tight~\cite{AustrinKS11}. Another closely related problem is the $k$-Set Packing problem, in which we are given sets of size at most $k$ and would like to pick as many disjoint sets as possible. It is simple to see that the problem is equivalent to finding an independent set in the ``conflict graph''--where each set becomes a vertex and two vertices are linked if and only if the sets intersect--and that this conflict graph is $(k + 1)$-claw-free. Thus, all aforementioned approximation algorithms for Maximum Independent Set on claw-free graphs immediately apply to $k$-Set Packing. However, the latter can also be less challenging: the aforementioned quasi-polynomial time algorithm of Cygan et al.~\cite{CyganGM13} for the former can be sped up to run in polynomial time while acheiving a similar approximation ratio~\cite{Cygan13,SviridenkoW13}. Meanwhile, the best known hardness of approximation for the problem remains an NP-hardness with approximation factor $\Omega(k/\log k)$ due to Hazan et al.~\cite{HazanSS06}.

\section{Preliminaries}

We use $\indp(G)$ to denote the size of the maximum independent set in the graph $G$.

\subsection{Concentration Inequalities}

We recall the standard multiplicative Chernoff bound:
\begin{theorem}[Multiplicative Chernoff Bound] \label{thm:mult-chernoff}
Let $X_1, \dots, X_m$ be i.i.d. Bernoulli random variable with mean at most $\mu$, and let $S = X_1 + \cdots + X_m$. Then, for any $\theta > \mu m$, we have
\begin{align*}
\Pr\left[S > \theta\right] < 
\exp(\theta - \mu m) \left(\frac{\mu m}{\theta}\right)^{\theta}
\end{align*}
\end{theorem}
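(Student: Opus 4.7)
The plan is to apply the standard Chernoff--Cram\'er exponential moment method: bound $\Pr[S > \theta]$ by Markov's inequality applied to $e^{tS}$ for a free parameter $t > 0$, and then optimize over $t$.

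First, for any $t > 0$, Markov's inequality gives
\[
\Pr[S > \theta] \;=\; \Pr\!\left[e^{tS} > e^{t\theta}\right] \;\leq\; e^{-t\theta}\, \E[e^{tS}].
\]
Using independence of the $X_i$ and the fact that each $X_i$ is Bernoulli with mean $\mu_i \leq \mu$, I would compute
\[
\E[e^{tS}] \;=\; \prod_{i=1}^m \E[e^{tX_i}] \;=\; \prod_{i=1}^m \bigl(1 + \mu_i(e^t - 1)\bigr) \;\leq\; \bigl(1 + \mu(e^t - 1)\bigr)^m \;\leq\; \exp\!\bigl(\mu m (e^t - 1)\bigr),
\]
where the last step uses the elementary inequality $1 + x \leq e^x$ (with $x = \mu(e^t-1) \geq 0$), and the next-to-last step uses monotonicity in each $\mu_i$ since $e^t - 1 > 0$.

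Next, I would optimize the bound $\exp\!\bigl(\mu m (e^t - 1) - t\theta\bigr)$ over $t > 0$. Setting the derivative to zero gives $\mu m\, e^t = \theta$, i.e.\ the optimal choice
\[
t^* \;=\; \ln\!\left(\frac{\theta}{\mu m}\right),
\]
which is strictly positive precisely because $\theta > \mu m$ (this is where the hypothesis is used). Substituting $t^*$ yields
\[
\mu m (e^{t^*} - 1) - t^*\theta \;=\; \theta - \mu m - \theta \ln\!\left(\frac{\theta}{\mu m}\right),
\]
and exponentiating gives the claimed bound $\exp(\theta - \mu m)\,(\mu m / \theta)^{\theta}$.

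There is essentially no hard step here; this is a textbook calculation. The only minor subtlety is the strictness of the inequality: Markov's inequality is strict whenever $e^{tS}$ is not almost surely concentrated on the threshold $e^{t\theta}$, which holds in the non-degenerate regime $\theta > \mu m$ with $\mu > 0$ (if $\mu = 0$ the statement is vacuous since $S = 0$ almost surely, and then the probability is $0$, still strictly less than the positive right-hand side).
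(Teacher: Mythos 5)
The paper cites this as the standard multiplicative Chernoff bound without supplying a proof, so there is no in-paper argument to compare against; your Chernoff--Cram\'er derivation (Markov's inequality applied to $e^{tS}$, then optimizing $t$ at $t^* = \ln(\theta/(\mu m))$) is the textbook route and the algebra is correct.

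One small imprecision is in how you locate the strictness. Attributing it to Markov's inequality is delicate (and not quite right as stated, since equality in $\Pr[Y \geq a] \leq \E[Y]/a$ requires $Y \in \{0,a\}$ a.s., and moreover you are bounding $\Pr[S > \theta]$, not $\Pr[S \geq \theta]$). The robust source of the strict inequality, whenever $\mu > 0$, is the step $1 + x \leq e^x$: with $x = \mu(e^{t^*} - 1) > 0$ it is strict, so $\bigl(1 + \mu(e^{t^*}-1)\bigr)^m < \exp\bigl(\mu m(e^{t^*}-1)\bigr)$, and the whole chain inherits the strict inequality. Your handling of $\mu = 0$ is also off: if $\mu = 0$ and $\theta > 0$, the right-hand side is $e^{\theta}\,(0/\theta)^{\theta} = 0$, not positive, so the claimed strict inequality $0 < 0$ in fact fails there. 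This is a degeneracy in the theorem statement itself rather than in your argument, and it is harmless for the paper's application (which only uses the bound after further relaxing $\mu m$ upward), but the correct remark is that the statement should be read with $\mu > 0$, not that the $\mu = 0$ case goes through.
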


For $\tau > 0$ and $x \in \R$, let $\clip_{\tau}(x) := \min\{x, \tau\}$. We will need the following lemma for the purpose of analyzing edge removal to bound the maximum degree of a random subgraph.

\begin{lemma} \label{lem:clip}
Let $X_1, \dots, X_m$ be i.i.d. Bernoulli random variable with mean at most $\mu$, and let $S = X_1 + \cdots + X_m$. Then, for any integer $\tau > \mu m$, we have
\begin{align*}
\E\left[S - \clip_{\tau}(S)\right] \leq \left(\frac{\mu m}{\tau - \mu m}\right)^2.
\end{align*}
\end{lemma}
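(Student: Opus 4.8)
The plan is to write $S - \clip_\tau(S) = \max\{S - \tau, 0\} = \sum_{j > \tau} \ind[S \geq j]$, so that by linearity of expectation $\E[S - \clip_\tau(S)] = \sum_{j=\tau+1}^{m} \Pr[S \geq j]$. Each tail probability $\Pr[S \geq j] = \Pr[S > j-1]$ can then be bounded using the multiplicative Chernoff bound (\Cref{thm:mult-chernoff}), which for $\theta = j - 1 \geq \tau > \mu m$ gives $\Pr[S > j-1] < \exp(j - 1 - \mu m)\left(\frac{\mu m}{j-1}\right)^{j-1}$. Actually, it is cleaner to apply Chernoff with $\theta = j$ directly (shifting the index): $\Pr[S \geq j] \leq \Pr[S \geq j] $ and bounding via $\theta=j$ is valid since $j > \tau > \mu m$. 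So I would get $\E[S - \clip_\tau(S)] \leq \sum_{j=\tau+1}^{\infty} \exp(j - \mu m)\left(\frac{\mu m}{j}\right)^{j}$, extending the sum to infinity for convenience.

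Next I would bound the summand crudely to turn it into a geometric-type series. Since $j \geq \tau+1 > \mu m$, a standard move is to use $\exp(j - \mu m)\left(\frac{\mu m}{j}\right)^j \leq \left(\frac{e\,\mu m}{j}\right)^j e^{-\mu m}$, but the more useful form here is to compare consecutive terms or to bound $\exp(x)\left(\frac{\mu m}{x}\right)^{x}$ (as a function of $x \geq \tau$) by a quantity of the form $\left(\frac{\mu m}{\tau}\right)^{\text{something}}$. Concretely, I expect the right path is: for $j \geq \tau$, each factor of $j$ in the denominator is at least $\tau$, while there are "extra" factors beyond the $\mu m$ in the numerator — more precisely $\exp(j-\mu m)\left(\frac{\mu m}{j}\right)^j$ is decreasing in $j$ for $j \geq \mu m$, and the ratio of the $(j+1)$-th term to the $j$-th term is $e \cdot \frac{(\mu m)}{j+1}\cdot\left(\frac{j}{j+1}\right)^{j} \leq \frac{e\mu m}{j+1}\cdot e^{-1} \cdot \frac{\text{correction}}{} $; using $(j/(j+1))^j \leq 1/e \cdot$ (something close to $1$)... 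The cleanest rigorous bound: $\left(\frac{j}{j+1}\right)^{j} \le e^{-j/(j+1)} = e^{-1} e^{1/(j+1)}$, so the ratio of successive terms is at most $\frac{\mu m}{j+1} e^{1/(j+1)} \le \frac{e^{1/(\tau+1)}\mu m}{\tau+1}$, which is at most $\frac{\mu m}{\mu m +1}<1$ only in a regime I'd need to control. I would therefore instead bound the $j=\tau+1$ term by $\left(\frac{\mu m}{\tau - \mu m}\right)^{2}$-ish directly and argue the tail geometrically decays with ratio $\le \frac{\mu m}{\tau}$.

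The main obstacle I anticipate is getting the constant exactly right — specifically turning the sum of Chernoff tails into the clean closed form $\left(\frac{\mu m}{\tau - \mu m}\right)^2$ rather than something merely of that order. I suspect the intended argument avoids Chernoff's $j$-th power entirely for the summation and instead notes that $\Pr[S \ge j]$ itself, as a function of $j$, is dominated by a geometric series: once $S$ exceeds $\mu m$, heuristically each additional unit of $S$ costs a factor comparable to $\mu m / j \le \mu m/\tau$; summing the geometric series $\sum_{i \ge 0}(\mu m/\tau)^{i}$ starting from the $j=\tau+1$ term and bounding that first term by $(\mu m/\tau)^{\tau - \mu m}$ or by Chernoff, then using $\frac{1}{1 - \mu m/\tau} = \frac{\tau}{\tau - \mu m}$, collapses things into $\frac{(\mu m)^{?}}{(\tau - \mu m)^{?}}$. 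I would reconcile the exponents by checking the worst case $\tau = \mu m + 1$, where the claimed bound reads $(\mu m)^2$, which is weak and hence easy to meet, and the case $\tau \gg \mu m$, where Chernoff gives super-exponential decay and the bound is trivially satisfied; the delicate regime is intermediate $\tau$, and I would pin the constant there. If the direct geometric bound loses a constant, I'd fall back to bounding $\E[S-\clip_\tau(S)] \le \E[S-\clip_\tau(S)]$ via $\E[(S-\mu m)^2\ind[S>\tau]]/(\tau - \mu m)$ or a Chebyshev-type second-moment estimate, since $\Var(S) \le \mu m$, giving $\E[S - \clip_\tau(S)] \le \frac{\E[(S-\mu m)^2]}{\tau - \mu m} = \frac{\Var(S)}{\tau-\mu m}\le \frac{\mu m}{\tau - \mu m}$ — though that is weaker than the stated bound when $\mu m$ is large, so the Chernoff route is almost certainly necessary for the claimed form, and nailing its constant is the crux.
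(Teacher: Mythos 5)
Your write-up is a plan, not a proof. The opening identity $\E[S-\clip_{\tau}(S)]=\sum_{j=\tau+1}^{m}\Pr[S\ge j]$ is fine, but the step you yourself identify as the crux --- turning the sum of Chernoff tails into exactly $\bigl(\frac{\mu m}{\tau-\mu m}\bigr)^2$ --- is never carried out: what follows is a sequence of heuristic ratio estimates (``something close to $1$'', exponents left as ``$?$'') and regime spot-checks, and the delicate regime is left open. In particular, when $\tau-\mu m=O(\sqrt{\mu m})$ each individual Chernoff tail is $\Theta(1)$ and your geometric ratio is only $1-\Theta\bigl(\frac{\tau-\mu m}{\tau}\bigr)$, so you must actually verify that (leading term)$\times\frac{\tau}{\tau-\mu m}$ stays below the claimed bound there; nothing in the proposal does this. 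The route can probably be pushed through (the target bound is weak, decaying only quadratically in $\tau-\mu m$), but as written there is a genuine gap, and edge cases such as $\mu m<1$ or small $\tau$ would also need handling.

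Your instinct that ``the Chernoff route is almost certainly necessary'' is also the wrong diagnosis, and this is where the paper's proof is much simpler. The paper writes $S-\clip_{\tau}(S)=\sum_{i\in[m]}X_i\cdot\ind[X_1+\cdots+X_{i-1}\ge\tau]$ (the number of $1$'s arriving after the prefix sum has already reached $\tau$), uses independence of $X_i$ from its prefix to get $\sum_{i}\E[X_i]\cdot\Pr[X_1+\cdots+X_{i-1}\ge\tau]$, and bounds each prefix probability by Chebyshev, $\Pr[X_1+\cdots+X_{i-1}\ge\tau]\le\frac{\mu(i-1)}{(\tau-\mu(i-1))^2}\le\frac{\mu m}{(\tau-\mu m)^2}$; summing the factors $\E[X_i]\le\mu$ over $i\in[m]$ gives $\mu m\cdot\frac{\mu m}{(\tau-\mu m)^2}$, exactly the stated bound. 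Your Chebyshev ``fallback'' $\E[S-\clip_{\tau}(S)]\le\frac{\Var(S)}{\tau-\mu m}\le\frac{\mu m}{\tau-\mu m}$ was close in spirit but applies Chebyshev to $S$ as a whole, which is one factor of $\frac{\mu m}{\tau-\mu m}$ short; the missing factor comes precisely from decomposing by index and letting Chebyshev act on prefix sums. Two smaller issues with that fallback as stated: it silently equates $\E[(S-\mu m)^2]$ with $\Var(S)$, which requires the means to be exactly $\mu$ (the hypothesis only says at most $\mu$, so one needs a coupling/WLOG step), and it implies the lemma only in the regime $\tau\le 2\mu m$ (where $\frac{\mu m}{\tau-\mu m}\ge 1$), not for all $\tau>\mu m$.
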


\begin{proof}
For any $i \in [m]$, we have $\Var[X_1 + \cdots + X_{i - 1}] = \Var[X_1] + \cdots \Var[X_{i-1}] \leq \mu(i - 1)$.

We can rewrite the LHS as
\begin{align*}
\E[S - \clip_{\tau}(S)] 
&= \E\left[\sum_{i \in [m]} X_i \cdot \ind[X_1 + \cdots X_{i - 1} \geq \tau]\right] \\
&= \sum_{i \in [m]} \E[X_i] \cdot \Pr[X_1 + \cdots X_{i - 1} \geq \tau] \\
&\leq \sum_{i \in [m]} \mu \cdot \frac{\mu(i - 1)}{(\tau - \mu(i - 1))^2} &\leq \left(\frac{\mu m}{\tau - \mu m}\right)^2,
\end{align*}
where in the first inequality we use Chebyshev's inequality.
\end{proof}

\subsection{Constraint Satisfaction Problems}

Formal definitions of a 2-CSP instance and its assignment are given below.

\begin{definition}
A 2-CSP instance $\Pi$ consists of:
\begin{itemize}
\item \emph{Constraint graph} $G = (V, E)$.
\item \emph{Alphabet} $\Sigma_v$ for all $v \in V$.
\item For each $e = (u, v) \in E$, a \emph{constraint} $R_e \subseteq \Sigma_u \times \Sigma_v$.
\end{itemize}

An \emph{assignment} is a tuple $(\psi_v)_{v \in V}$ such that $\psi_v \in \Sigma_v$. Its value $\val_{\Pi}(\psi)$ is defined as the fraction of edges $e = (u, v) \in E$ such that $(\psi_u, \psi_v) \in E$; such an edge (or constraint) is said to be satisfied. The value of the instance is defined as $\val(\Pi) = \max_{\psi} \val_{\Pi}(\psi)$ where the maximum is over all assignments $\psi$.
\end{definition}

Additionally, we use the following terminologies for CSPs:
\begin{itemize}
\item A 2-CSP instance is \emph{$d$-bounded-degree} if the every vertex in the constraint graph $G$ has degree at most $d$.
\item The \emph{alphabet size} of a 2-CSP instance is $\max_{v \in V} |\Sigma_v|$. 
\item A 2-CSP instance is \emph{bipartite} if the constraint graph $G = (A, B, E)$ is a bipartite graph. 
\item A bipartite 2-CSP instance is \emph{$(d_1, d_2)$-biregular} if every left-hand side vertex (in $A$) has degree $d_1$ and every right-hand side vertex (in $B$) has degree $d_2$.
\item A bipartite 2-CSP instance is \emph{$(d_1, d_2)$-bounded-degree} if every left-hand side vertex (in $A$) has degree at most $d_1$ and every right-hand side vertex (in $B$) has degree at most $d_2$.
\item The left (resp. right) alphabet size of a bipartite 2-CSP instance is $\max_{a \in A} |\Sigma_a|$ (resp. $\max_{b \in B} |\Sigma_b|$).
\end{itemize}

\subsection{Hardness of 2-CSP in terms of Alphabet Size}

As discussed in the introduction, we need hardness of almost-perfect completeness 2-CSP with a gap that is polynomial in the alphabet size. For NP-hardness, the best known result is due to \cite{Chan16}, which has a gap of $R^{1/2 - o(1)}$:

\begin{theorem}[\cite{Chan16}] \label{thm:np-2csp-unbounded-deg}
For any $\zeta > 0$ and sufficiently large $R \in \N$ such that $\sqrt{R}$ is a prime number, there exists $d_1, d_2 \in \N$ such that it is NP-hard, given a bipartite $(d_1, d_2)$-biregular\footnote{Note that biregularity follows immediately if we intiate the reduction of \cite{Chan16} with a biregular Label Cover.} 2-CSP $\Pi$ with left alphabet size $R$ and right alphabet size $\sqrt{R}$, to distinguish between the following two cases:
\begin{itemize}
\item (Yes Case) $\val(\Pi) \geq 1 - \zeta$,
\item (No Case) $\val(\Pi) \leq O\left(\frac{\log R}{\sqrt{R}}\right)$.
\end{itemize}
\end{theorem}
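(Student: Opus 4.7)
The plan is to derive this statement by invoking Chan's~\cite{Chan16} hardness amplification and verifying that it can be arranged to produce the biregular bipartite structure with the asymmetric alphabet sizes specified here (left size $R$, right size $\sqrt{R}$).

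First I would begin from a biregular bipartite Label Cover instance $L$ with completeness $1-\zeta/2$ and soundness $R_0^{-c}$ for some absolute constant $c > 0$, obtained by applying Raz's parallel repetition to a Label Cover from the PCP theorem, starting from a biregular base instance (e.g., Moshkovitz--Raz). The base alphabet size $R_0$ and the repetition count $k$ are chosen so that $R_0^{k} = R$ on the left and $R_0^{k/2} = \sqrt{R}$ on the right, where $R$ is large enough that the soundness of $L$ is a sufficiently small polynomial in $R_0$.

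Next I would apply Chan's pairwise-independent-subgroup gadget on top of $L$. Each left-hand variable of $L$ is replaced by its encoding under a pairwise-independent family indexed by a subgroup of $\mathbb{F}_p^t$ with $p^t \approx R$, and each right-hand variable by a projected encoding of size $\sqrt{R}$. The new constraints enforce consistency of the encoded assignments together with the original projections of $L$. The gadget is sufficiently symmetric that biregularity of $L$ lifts to $(d_1,d_2)$-biregularity of the new instance for some $d_1, d_2 \in \N$ (depending on $R$ and the subgroup structure). Completeness $1-\zeta$ follows by encoding a good assignment of $L$ under the gadget.

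The main obstacle is the soundness analysis. Suppose some assignment satisfies more than $C \cdot \log R / \sqrt{R}$ fraction of constraints for a sufficiently large constant $C$. I would then apply Fourier analysis over the chosen subgroup: using pairwise-independence, one bounds a suitable second moment by the fraction of satisfied constraints via Parseval, and then Cauchy--Schwarz extracts a randomized decoding to an assignment of $L$ whose expected value exceeds $R_0^{-c}$, contradicting the soundness of $L$. The $\log R$ factor in the statement arises from the slack in the Cauchy--Schwarz step (equivalently, a union bound over candidate decodings), and the delicate part of the proof is calibrating $R_0$, $k$, and the subgroup order so that this Fourier/decoding argument exactly yields the $O(\log R / \sqrt{R})$ threshold while keeping the construction biregular.
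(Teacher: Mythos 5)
The paper does not prove this theorem; it is stated as a direct citation to Chan~\cite{Chan16}, with the only added commentary being the footnote that biregularity can be ensured by initiating Chan's reduction from a biregular Label Cover instance. There is therefore no internal proof here to compare your sketch against.

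As a reconstruction of Chan's argument, your plan captures the right broad strokes: starting from a (biregular) Label Cover with polynomially small soundness, composing with a pairwise-independent-subgroup gadget over $\mathbb{F}_p^t$, and carrying out a Fourier/invariance-style soundness analysis. However, several details are left vague in ways that would require substantial work to pin down, and at least one is likely off. In particular, your description of the gadget (encoding each left variable under a pairwise-independent family, projecting on the right) does not match the mechanics of Chan's composition, and the claim that the $\log R$ factor arises from ``slack in the Cauchy--Schwarz step'' (or a union bound over decodings) is speculative: in Chan-style soundness analyses the $\log$-factor typically comes from the noise/smoothing parameter needed to make the invariance argument go through while preserving near-perfect completeness, not from Cauchy--Schwarz. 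More pragmatically, reproving this theorem from scratch is a much heavier lift than the paper itself undertakes; the paper treats it as a black box. The right scope for your verification is to check that Chan's stated results do in fact produce an instance with the $(d_1,d_2)$-biregularity (given a biregular starting Label Cover), the $R$/$\sqrt{R}$ left/right alphabet asymmetry with $\sqrt{R}$ prime, and the $O(\log R/\sqrt{R})$ soundness, rather than to reconstruct the entire composition and its Fourier analysis.
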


For UGC-hardness, a standard proof technique by Khot, Kindler, Mossel, and O'Donnell~\cite{KhotKMO07} yields a hardness of factor $R^{1 - o(1)}$, as stated below. Since we are not aware\footnote{While Kindler et al.~\cite{KindlerKT16} proved a UGC-hardness result with a similar factor, their result does not satisfy almost-perfect completeness, making it unsuitable for our purpose.} of such a result fully written down in literature, we provide its proof in \Cref{sec:ug-2csp} for completeness.

\begin{theorem} \label{thm:ug-2csp-unbounded-deg}
Assuming the Unique Games Conjecture, for any $\zeta > 0$ and sufficiently large $R \in \N$, there exists $d_1, d_2 \in \N$ such that it is NP-hard, given a bipartite $(d_1, d_2)$-biregular 2-CSP $\Pi$ with alphabet size $R$, to distinguish between the following two cases:
\begin{itemize}
\item (Yes Case) $\val(\Pi) \geq 1 - \zeta$,
\item (No Case) $\val(\Pi) \leq O\left(\frac{\log^2 R}{R}\right)$.
\end{itemize}
\end{theorem}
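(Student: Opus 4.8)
The plan is to follow the Khot--Kindler--Mossel--O'Donnell template~\cite{KhotKMO07}: compose a Unique Games instance (hard under~\cite{Khot02}) with a ``long code'' dictatorship test whose alphabet is $[R]$, tuning the test's noise so that it is simultaneously $(1-\zeta)$-complete and has soundness $\tO(1/R)$ against strategies with no influential coordinate. I would start from the UGC in the standard bipartite, biregular form: for every $\eta,\delta>0$ there is $L\in\N$ such that, given a bipartite biregular Unique Games instance $\cU$ over alphabet $[L]$ with bijective constraints $\{\sigma_e\}$, it is NP-hard to distinguish $\val(\cU)\geq 1-\eta$ from $\val(\cU)\leq\delta$ (bipartiteness and biregularity are standard normalizations of UGC).

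\textbf{The test, composition, and completeness.} Identify the ``long code'' of a label $j\in[L]$ with the dictator $x\mapsto x_j$ on $\Z_R^L$; the variables of the output $2$-CSP $\Pi$ will be the pairs (UG vertex, point of $\Z_R^L$), so both sides get alphabet exactly $R$ (and, unlike \Cref{thm:np-2csp-unbounded-deg}, no primality restriction on $R$ is needed). We \emph{fold} each block over the diagonal $\Z_R\cdot\bone$, which forces every block function to be balanced. For a UG edge $e=(v,w)$ with bijection $\sigma_e$, the test samples $x\sim\Z_R^L$ uniformly, permutes its coordinates by $\sigma_e$, re-randomizes each coordinate independently with a small fixed probability $\rho=\rho(\zeta)$ to obtain $y$, and accepts iff the folded values of $f_v$ at $x$ and $g_w$ at $y$ are consistent; the resulting edge constraint of $\Pi$ depends on $x,y,\sigma_e$ and the folding shifts, and is a genuine (per-edge) $2$-CSP constraint. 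Matched dictators pass with probability $\geq 1-\rho$. Placing one folded block at each UG vertex and the test-$e$ constraints across the endpoint blocks of each $e$ makes $\Pi$ bipartite and $(d_1,d_2)$-biregular for suitable constants $d_1,d_2$. For completeness, if $\val(\cU)\geq 1-\eta$ we assign each block the dictator of its label under an optimal UG labeling: a union bound over the $\leq\eta$ fraction of violated edges and the $\leq\rho$ per-edge failure probability gives $\val(\Pi)\geq 1-\eta-\rho\geq 1-\zeta$ once $\eta,\rho$ are small in terms of $\zeta$.

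\textbf{Soundness.} Suppose $\val(\Pi)>C\log^2 R/R$ for a large absolute constant $C$. Averaging shows that an $\Omega(\log^2 R/R)$ fraction of UG edges $e=(v,w)$ are ``good'': the block functions pass test $e$ with probability $\Omega(\log^2 R/R)$. On a good edge, expand the acceptance probability in the $\Z_R^L$-character basis, with the ``degree'' of a character being the size of its support. The degree-$\geq D$ contribution is at most $(1-\rho)^D$, which is $o(\log^2R/R)$ once $D:=\Theta_\zeta(\log R)$; folding pins the degree-$0$ term to $1/R$; hence the nonconstant part of degree $<D$ is still $\Omega(\log^2 R/R)$. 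A Cauchy--Schwarz plus hypercontractivity argument, exactly as in~\cite{KhotKMO07} applied to the degree-$<D$ truncations, then produces a label $j$ with $\Inf_j(f_v)$ and $\Inf_{\sigma_e(j)}(g_w)$ both at least some $\beta=\mathrm{poly}(\log^2 R/R)$ (a bound independent of $L$). Since a degree-$<D$ truncation has total influence $O(D)=O(\log R)$, each block has at most $O(\log R)/\beta=\mathrm{poly}(R)$ candidate labels; letting every block output a uniformly random candidate label yields a randomized labeling of $\cU$ of expected value $\Omega(\log^2 R/R)\cdot(\beta/\log R)^2=:\delta_0(R)>0$. Now fix $R$, then set the UGC soundness parameter to $\delta<\delta_0(R)$ (which fixes a constant $L=L(\delta)$, and then choose $\eta$ small): the derived labeling contradicts the No case of $\cU$, so any polynomial-time algorithm distinguishing the two cases of $\Pi$ would distinguish the two cases of $\cU$.

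\textbf{Main obstacle.} The only nontrivial ingredient is the harmonic-analytic soundness lemma for the folded $[R]$-ary long-code test: it must keep the dictator acceptance probability at $1-\zeta$ while driving the ``no common influential coordinate'' acceptance probability down to $\tO(1/R)$. The noise rate $\rho$ trades these two requirements against each other, and the balance is exactly what forces the extra logarithmic factors, so that the bound comes out $O(\log^2 R/R)$ rather than the information-theoretically optimal $\Theta(1/R)$. This coexistence of near-perfect completeness with near-optimal soundness is precisely the regime not reached by the construction of~\cite{KindlerKT16}, which is why we write the proof out in \Cref{sec:ug-2csp}.
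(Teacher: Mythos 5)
There is a genuine gap, and it sits exactly at the step you flag as the ``main obstacle'': the dictatorship test you propose cannot have the soundness you claim. Your test is the folded noisy-consistency (Max-2-Lin-type) test: $y$ is $\sigma_e(x)$ with each coordinate re-randomized with a \emph{small constant} probability $\rho=\rho(\zeta)$, and the constraint is (shifted) equality of the two folded values. For this test, the best bound against balanced, low-influence strategies is governed by the Gaussian stability quantity $R\cdot\Gamma_{1-\rho}(1/R,1/R)$, which by the very estimate quoted in the paper is about $R^{-\rho/(2-\rho)}=R^{-\Theta(\rho)}$ --- and this is essentially \emph{attainable} by low-influence functions (Gaussian-cell partitions of $[R]^L$ into $R$ balanced parts), which is the content of the KKMO analysis of Unique Games/2-Lin itself. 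Hence an acceptance probability of $\Omega(\log^2 R/R)$ is far below the noise-stability threshold and does \emph{not} force influential coordinates; your Cauchy--Schwarz/hypercontractivity step would only kick in once acceptance exceeds $R^{-\Theta(\rho)}$. To push the soundness of a noisy-equality test down to $R^{-(1-o(1))}$ you would need correlation $O(1/\log R)$, i.e.\ noise rate $\rho=1-O(1/\log R)$, which destroys the completeness $1-\zeta$. So with your test the two requirements genuinely cannot be met simultaneously; the trade-off does not ``come out at $O(\log^2 R/R)$''.

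The paper's proof resolves this tension differently: instead of putting the noise into the acceptance event of an equality predicate, it puts the sparsity into the predicate itself. The predicate $P\subseteq[R]^2$ is taken to be the edge set of a $t$-regular expander $H$ on $[R]$ with second normalized eigenvalue $O(1/\sqrt{t})$ (Friedman), with $t=\Theta(\log^2 R)$, and the test distribution $\mu$ is uniform on $P$. Because $\mu$ is supported inside $P$, dictators pass with probability exactly $1$ (perfect completeness of the test; the $\zeta$ loss comes only from the UG layer), while the correlation of the correlated space $([R],[R];\mu)$ is the spectral gap quantity $O(1/\sqrt{t})=O(1/\log R)$. Mossel's invariance principle then bounds the acceptance probability of balanced low-influence strategies by $\sum_{(i,j)\in P}\bigl(\Gamma_{\rho}(1/R,1/R)+\eps\bigr)\leq tR\cdot(1/R)^{2-2\rho}+1/R=O(\log^2 R/R)$. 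If you want to salvage your write-up, you should replace the noisy-equality test by such an expander-predicate test (or otherwise decouple ``dictators always accept'' from ``correlation is $O(1/\log R)$''); the composition-with-UG, folding/balancedness, bipartization and decoding portions of your argument are standard and fine.
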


\section{Hardness of Bounded-Degree 2-CSP}

In this section, we present our main reduction and prove \Cref{thm:ug-2csp-main} and \Cref{thm:np-2csp-main}.

\subsection{Adjusting the Degrees}

As alluded to in the introduction, it will be useful to have a flexible control of the degrees of the two sides of the constraint graph. This can be easily done by copying the vertices on each side, as formalized below.

\begin{lemma} \label{lem:copy}
For any $d_1, d_2, c_1, c_2 \in \N$, there is a polynomial-time reduction from a bipartite $(d_1, d_2)$-biregular 2-CSP $\Pi$ to a bipartite $(c_2 d_1 d_2, c_1 d_1 d_2)$-biregular 2-CSP $\Pi'$ such that $\val(\Pi') = \val(\Pi)$. Moreover, the reduction preserves the left and right alphabet sizes.
\end{lemma}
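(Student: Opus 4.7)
The plan is to prove this by a straightforward vertex-duplication construction, where each left vertex of $\Pi$ is replaced by $c_1 d_1$ identical copies and each right vertex by $c_2 d_2$ identical copies, and each original edge is expanded into the complete bipartite set of edges between the corresponding copy sets (with the same constraint $R_e$ on every new edge). Formally, if $\Pi$ has constraint graph $G = (A, B, E)$ with alphabets $(\Sigma_v)_{v \in A \cup B}$ and constraints $(R_e)_{e \in E}$, then $\Pi'$ has vertex sets $A' = \{(u, i) : u \in A,\, i \in [c_1 d_1]\}$ and $B' = \{(v, j) : v \in B,\, j \in [c_2 d_2]\}$, with $\Sigma_{(u,i)} := \Sigma_u$ and $\Sigma_{(v,j)} := \Sigma_v$, and with edge set $E' = \{((u,i),(v,j)) : (u,v) \in E, i \in [c_1 d_1], j \in [c_2 d_2]\}$ carrying the constraint $R_{(u,v)}$. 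By construction this reduction is polynomial time and preserves the left and right alphabet sizes.

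Next I would verify biregularity by a direct count: a copy $(u, i) \in A'$ inherits the $d_1$ neighbors of $u$ in $B$, each of which contributes $c_2 d_2$ neighbors in $B'$, giving degree $d_1 \cdot c_2 d_2 = c_2 d_1 d_2$; symmetrically, each $(v, j) \in B'$ has degree $d_2 \cdot c_1 d_1 = c_1 d_1 d_2$. Thus $\Pi'$ is bipartite $(c_2 d_1 d_2,\, c_1 d_1 d_2)$-biregular as required.

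For value preservation I would argue both inequalities. The $\val(\Pi') \geq \val(\Pi)$ direction is immediate by lifting: given an assignment $\psi$ to $\Pi$, set $\psi'_{(u,i)} := \psi_u$ and $\psi'_{(v,j)} := \psi_v$; then each original edge $(u,v)$ contributes $c_1 d_1 \cdot c_2 d_2$ new edges that are all simultaneously satisfied iff $(u,v)$ was, so the fraction of satisfied constraints is unchanged. For the reverse inequality $\val(\Pi) \geq \val(\Pi')$, I would use a random-restriction argument: given any $\psi'$ for $\Pi'$, sample $i_u \in [c_1 d_1]$ uniformly at random for each $u \in A$ and $j_v \in [c_2 d_2]$ uniformly at random for each $v \in B$, independently, and set $\psi_u := \psi'_{(u, i_u)}$, $\psi_v := \psi'_{(v, j_v)}$. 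For each original edge $(u,v)$, the probability that $(\psi_u, \psi_v) \in R_{(u,v)}$ equals the fraction of pairs $(i,j) \in [c_1 d_1] \times [c_2 d_2]$ for which $(\psi'_{(u,i)}, \psi'_{(v,j)}) \in R_{(u,v)}$, which is exactly the fraction of the $c_1 d_1 c_2 d_2$ new edges derived from $(u,v)$ that $\psi'$ satisfies. By linearity of expectation, the expected fraction of constraints in $\Pi$ satisfied by $\psi$ equals the fraction satisfied by $\psi'$ in $\Pi'$, so some realization achieves at least $\val_{\Pi'}(\psi')$.

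There is no real obstacle here; the only mild subtlety is getting the multiplicities right ($c_1 d_1$ copies on the left and $c_2 d_2$ copies on the right, not the other way around), which I would double-check via the degree count above. Everything else is bookkeeping.
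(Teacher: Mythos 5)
Your construction is the same as the paper's up to re-indexing (you flatten $[d_1]\times[c_1]$ into $[c_1 d_1]$, etc.), and the forward direction of value preservation is identical. For the reverse inequality the paper partitions $E'$ into isomorphic copies of $E$ and averages; your random-restriction argument is the probabilistic restatement of the same averaging, so the two proofs are essentially the same.
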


\begin{proof}
Let the original 2-CSP instance be $\Pi = (G = (A, B, E), (\Sigma_v)_{v \in A \cup B}, (R_e)_{e \in E})$ where $G$ is $(d_1, d_2)$-biregular. We define $\Pi = (G' = (A', B', E'), (\Sigma_{v'})_{v' \in A' \cup B'}, (R_{e'})_{e' \in E'})$ where
\begin{itemize}
\item $A' = A \times [d_1] \times [c_1]$,
\item $B' = B \times [d_2] \times [c_2]$,
\item $E' = \{((a, i_1, j_1), (b, i_2, j_2)) \mid (a, b) \in E, i_1 \in[d_1], i_2\in[d_2], j_1\in[c_1], j_2\in[c_2]\}$,
\item $\Sigma'_{(v, i, j)} = \Sigma_v$ for all $(v, i, j) \in A' \cup B'$.
\item $R_{((a, i_1, j_1), (b, i_2, j_2))} = R_{(a, b)}$ for all $((a, i_1, j_1), (b, i_2, j_2)) \in E'$.
\end{itemize}
To see that $\val(\Pi') \geq \val(\Pi)$, let $\psi$ denote the assignment of $\Pi$ with $\val_{\Pi}(\psi) = \val(\Pi)$. Define an assignment $\psi'$ of $\Pi'$ such that $\psi'_{(v, i, j)} := \psi_v$. it is simple to check that $\val_{\Pi'}(\psi') = \val_{\Pi}(\psi) = \val(\Pi)$.

On the other hand, to see that $\val(\Pi') \leq \val(\Pi)$, notice that $E'$ is can be partitioned into $E_{(i_1, j_1, i_2, j_2)} := \{((a, i_1, j_1), (b, i_2, j_2)) \mid (a, b) \in E\}$ where $i_1 \in[d_1], i_2\in[d_2], j_1\in[c_1], j_2\in[c_2]$. Thus, since any assignment satisfies at most $\val(\Pi)$ fraction of $E_{(i_1, j_1, i_2, j_2)}$, we can conclude that any assignment also satisfies at most $\val(\Pi)$ fraction of $E$.
\end{proof}

\subsection{Main Reduction: Degree Reduction via Subsampling}

We are now ready to state our main reduction and its properties. For readers interested in only the UGC-hardness results, it suffices to think of just the case where the degree bounds $d_A, d_B$ are equal, the alphabet sizes are equal (i.e. $t = 1$) and $\nu = 1 - o(1)$ in the theorem statement below.

\begin{theorem} \label{thm:subsampling-main}
For any $t, \delta, \nu \in (0, 1]$ such that $\delta < \nu$, any positive integer $C$, and any sufficiently large positive integers $d_A, d_B \geq d_0(\delta, \nu)$ and $R \geq R_0(\delta, \nu, t, d_A, d_B)$%
, the following holds: there is a randomized polynomial-time reduction from a bipartite $(d_A C, d_B C)$-biregular 2-CSP $\Pi'$ with left alphabet size at most $R$ and right alphabet size at most $R^t$ to a $(d_A, d_B)$-bounded-degree 2-CSP $\Pi''$ such that, with probability $2/3$, we have
\begin{itemize}
\item (Completeness) $\val(\Pi'') \geq \val(\Pi') - \delta$, and,
\item (Soundness) If $\val(\Pi'') \leq \frac{1}{R^\nu}$, then $\val(\Pi') \leq \frac{1}{\nu - \delta}\left(\frac{1}{d_A} + \frac{t}{d_B}\right)$.
\end{itemize}
\end{theorem}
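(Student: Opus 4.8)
The reduction is the subsampling procedure sketched in the introduction, combined with the degree-balancing from \Cref{lem:copy}. Starting from the $(d_AC, d_BC)$-biregular instance $\Pi'$, form $\Pi^{\mathrm{samp}}$ by keeping each edge independently with probability $p := 1/C$, and then form $\Pi''$ by deleting edges until every left vertex has degree $\le d_A$ and every right vertex has degree $\le d_B$. Since the completeness direction is the easy one — any assignment keeps all its satisfied edges under (i), and deletion in (ii) only shrinks the instance — the real content is soundness, and because of the contrapositive phrasing I must argue: \emph{with probability $2/3$, if $\val(\Pi') > \gamma$ where $\gamma := \tfrac{1}{\nu-\delta}\bigl(\tfrac{1}{d_A}+\tfrac{t}{d_B}\bigr)$, then $\val(\Pi'') > 1/R^{\nu}$}. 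Equivalently: w.h.p. over the sampling, \emph{every} assignment $\psi$ with $\val_{\Pi'}(\psi) > \gamma$ also satisfies more than $1/R^{\nu}$ fraction of the edges of $\Pi''$. So I fix the best assignment of $\Pi'$ (it witnesses $\val_{\Pi'} > \gamma$), and I want it to still do well in $\Pi''$; but to make the statement hold for the specific $\Pi''$ the reduction outputs, I actually need it simultaneously for all high-value assignments, which a union bound handles.

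\textbf{Key steps.} First, control the edge count of $\Pi^{\mathrm{samp}}$: the expected number of kept edges is $|E(\Pi')|/C$, and by Chernoff (\Cref{thm:mult-chernoff}) and its lower-tail analogue, with probability $\ge 9/10$ the kept-edge count is within, say, a $(1\pm o(1))$ factor of its mean; call this event $\mathcal{E}_1$. Second, bound the total number of edges deleted in step (ii). Here I invoke \Cref{lem:clip}: a fixed left vertex $a$ has $d_AC$ incident edges in $\Pi'$, each surviving with probability $1/C$, so its surviving degree $S_a$ has mean $d_A$; with $\tau = $ (something like) $2d_A$ — but I actually want the threshold to be exactly $d_A$, so I should instead apply \Cref{lem:clip} after first choosing $p$ slightly below $1/C$, or equivalently note $\E[S_a - \clip_{d_A}(S_a)]$ is not directly covered since \Cref{lem:clip} needs $\tau > \mu m$; the clean fix is to sample with probability $p = (1-\eta)/C$ for a tiny $\eta$, so $\mu m = (1-\eta)d_A < d_A = \tau$, giving $\E[S_a - \clip_{d_A}(S_a)] \le (\tfrac{1-\eta}{\eta})^2 d_A^{-?}$ — wait, it is $\bigl(\tfrac{\mu m}{\tau-\mu m}\bigr)^2 = \bigl(\tfrac{(1-\eta)d_A}{\eta d_A}\bigr)^2 = (\tfrac{1-\eta}{\eta})^2$, a constant, which is a $O_\eta(1/d_A)$ fraction of the expected $d_A$ edges per left vertex; summing over all left and right vertices, the expected number of deleted edges is a $o(1)$ fraction of $|E(\Pi^{\mathrm{samp}})|$ provided $d_A, d_B$ are large (this is where $d_0(\delta,\nu)$ enters), and Markov gives a deletion event $\mathcal{E}_2$ of probability $\ge 9/10$. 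Third — the heart — the soundness concentration: fix an assignment $\psi$ of $\Pi'$ with $\val_{\Pi'}(\psi) =: s > \gamma$; the number of its satisfied edges surviving into $\Pi^{\mathrm{samp}}$ is a sum of independent Bernoullis with mean $s|E(\Pi')|/C \ge \gamma|E(\Pi')|/C$, so by a Chernoff \emph{lower} tail it is $\ge \tfrac12 \gamma|E(\Pi')|/C$ except with probability $\exp(-\Omega(\gamma|E(\Pi')|/C))$. Since $|E(\Pi')|/C = n_A d_A = n_B d_B$ (where $n_A = |A|$, $n_B=|B|$), and $\gamma d_A \ge \tfrac{1}{\nu-\delta}$, the failure probability is $\le \exp(-\Omega(n_A))$; crucially I union-bound only over assignments on the \emph{larger} alphabet side cleverly — actually over all $R^{n_A}\cdot (R^t)^{n_B}$ assignments, so I need $\exp(-\Omega(\gamma n_A d_A)) \ll R^{-n_A - t n_B}$, i.e. $\gamma d_A n_A \gtrsim (n_A + t n_B)\ln R$. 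Using $n_A d_A = n_B d_B$ this rearranges to exactly $\gamma \ge \tfrac{1+o(1)}{\ln R}\bigl(\tfrac{\ln R}{d_A} + \tfrac{t\ln R}{d_B}\bigr)$ — matching $\gamma$ up to the $\tfrac{1}{\nu-\delta}$ versus $1$ discrepancy, which is absorbed because we only need $\val(\Pi'') > 1/R^\nu$, not $1/R$: the surviving satisfied edges number $\ge \tfrac12\gamma n_A d_A$, and after deleting the $o(1)$-fraction from $\mathcal{E}_2$ and dividing by $|E(\Pi'')| \le |E(\Pi^{\mathrm{samp}})| = (1+o(1))n_A d_A$, the resulting value is $\ge (\tfrac12 - o(1))\gamma \ge \tfrac{1}{2(\nu-\delta)}(\tfrac1{d_A} + \tfrac{t}{d_B})$; I need this $> R^{-\nu}$, and since $\tfrac1{d_A} + \tfrac{t}{d_B} = \Omega(1/\max(d_A,d_B))$ while $R \ge R_0(\delta,\nu,t,d_A,d_B)$ is chosen huge, indeed $R^{-\nu}$ is smaller — hmm, that is backwards, $R^{-\nu}$ tiny is good for us since we want to \emph{exceed} it. Good: for $R$ large enough in terms of $d_A, d_B, t$, $R^{-\nu} < \tfrac{1}{2(\nu-\delta)}(\tfrac1{d_A}+\tfrac{t}{d_B})$, so the surviving assignment beats the threshold. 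Take the union bound over all $R^{n_A}(R^t)^{n_B}$ assignments $\psi$ with $\val_{\Pi'}(\psi) > \gamma$ inside event $\mathcal{E}_3$ of probability $\ge 9/10$; then on $\mathcal{E}_1\cap\mathcal{E}_2\cap\mathcal{E}_3$ (probability $\ge 2/3$) the stated soundness implication holds, because any such $\psi$ — in particular one achieving $\val(\Pi')$ — certifies $\val(\Pi'') > R^{-\nu}$.

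\textbf{Main obstacle.} The delicate point is squeezing the union-bound exponent: the Chernoff bound on the surviving count of satisfied edges gives a failure probability $\exp(-\Theta(s\cdot n_A d_A))$ which is only $\exp(-\Theta(\gamma n_A d_A))$ in the worst case $s=\gamma$, and this must beat $R^{n_A + t n_B}$; the identity $n_A d_A = n_B d_B$ converts $n_A + t n_B = n_A d_A(\tfrac{1}{d_A} + \tfrac{t}{d_B})$, so the requirement becomes $\gamma \cdot c_0 \ge (\tfrac1{d_A}+\tfrac{t}{d_B})\ln R$ for the Chernoff constant $c_0$, and this is precisely why $\gamma$ carries the factor $\tfrac{1}{\nu-\delta}$ rather than just $\tfrac1{\ln R}$-type scaling — the slack between $\delta$ (completeness loss, governing the effective constant in the tail after one rescales the "$\gamma'$ vs $\gamma$" gap) and $\nu$ (the target gap exponent $R^{-\nu}$) must be strictly positive, which is the hypothesis $\delta < \nu$. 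Handling the edge-deletion step without breaking biregularity bookkeeping — in particular making sure the $o(1)$-fraction deletion bound holds simultaneously with the lower bound on surviving satisfied edges — requires the two events $\mathcal{E}_2,\mathcal{E}_3$ to be defined on the same probability space and combined by a union bound, which is routine once the quantitative bounds above are in place. The choice $R_0(\delta,\nu,t,d_A,d_B)$ is exactly what makes $R^{-\nu} < \tfrac{1}{2(\nu-\delta)}(\tfrac1{d_A}+\tfrac{t}{d_B})$ and simultaneously makes all the $(1-o(1))$ and $o(1)$ terms (which depend on $R$) small enough.
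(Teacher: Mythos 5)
The core problem is the direction of the soundness implication. The printed soundness item has $\Pi'$ and $\Pi''$ transposed (as the paper's own proof and its use in \Cref{thm:ug-2csp-main,thm:np-2csp-main} make clear): what is actually proved, and what the downstream reductions need, is ``if $\val(\Pi') \le R^{-\nu}$ then, on the good event, $\val(\Pi'') \le \frac{1}{\nu-\delta}\bigl(\frac{1}{d_A}+\frac{t}{d_B}\bigr)$'' --- low value of the \emph{input} forces low value of the \emph{output}. You read the statement literally and set out to prove its converse, ``$\val(\Pi') > \gamma \Rightarrow \val(\Pi'') > R^{-\nu}$''. The paper's argument is an upper-tail bound: it sets $\chi=\frac{1}{\nu-2\lambda}\bigl(\frac{1}{d_A}+\frac{t}{d_B}\bigr)$, notes that for each fixed assignment the surviving satisfied edges have mean at most $R^{-\nu} n_E$, and applies \Cref{thm:mult-chernoff} to get failure probability at most $\bigl(\frac{e}{\chi R^{\nu}}\bigr)^{\chi n_E}\le R^{-(\nu-\lambda)\chi n_E}$, which beats the union bound over $R^{|A'|}\cdot R^{t|B'|}=R^{n_E(1/d_A+t/d_B)}$ assignments precisely because $(\nu-\lambda)\chi > \frac{1}{d_A}+\frac{t}{d_B}$. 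The assumed value $R^{-\nu}$ sits in the \emph{base} of the tail bound and contributes $\nu\log R$ per sampled edge to the exponent; this is exactly the ``$\gamma\ll\gamma'$'' leverage described in the overview, and it has no analogue in your lower-tail setup.

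Even on its own terms your argument has two quantitative gaps. First, your union bound cannot close: a constant-factor lower-tail Chernoff bound for a binomial of mean $\Theta(\gamma n_E)=\Theta(n_E/d)$ gives failure probability only $\exp(-\Theta(n_E/d))$, while there are $R^{n_E(1/d_A+t/d_B)}$ assignments; the condition you yourself derive, $\gamma\gtrsim\bigl(\frac{1}{d_A}+\frac{t}{d_B}\bigr)\ln R$, is violated because $\gamma=\frac{1}{\nu-\delta}\bigl(\frac{1}{d_A}+\frac{t}{d_B}\bigr)$ carries no $\ln R$ factor and $R\ge R_0(\delta,\nu,t,d_A,d_B)$ is chosen enormous relative to $d_A,d_B$; the constant $\frac{1}{\nu-\delta}$ does not absorb a $\ln R$. (For the implication you chose, a single witness assignment would in fact suffice and no union bound is needed --- but that does not rescue the next point.) Second, the pruning step: \Cref{lem:clip} plus Markov only bounds the number of removed edges by roughly $\lambda n_E$ (the paper's event $\cE_2$), i.e.\ by $\Theta(n_E/(d\eta^2))$ in expectation in your notation, which exceeds the $\Theta(\gamma n_E)=\Theta(n_E/d)$ surviving satisfied edges of your witness by a factor $\Theta(1/\eta^2)$; since the removals are arbitrary, they could all land on those edges, so the claim that the value after pruning is still $(\frac12-o(1))\gamma$ is unsupported. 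In the paper the removed edges are only ever compared against $|E_1|$ (completeness, where the satisfied count is $\Theta(n_E)$) or used to lower-bound the denominator $|E''|$ (soundness), where a $\lambda$-fraction loss is harmless. Your completeness remark (``deletion only shrinks the instance'') has the same flaw in miniature --- removing satisfied edges lowers the value, a fraction --- though there it is repairable exactly as in the paper via $\cE_1,\cE_2,\cE_3$.
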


\begin{proof}
Let $\Pi' = (G' = (A', B', E'), (\Sigma_{v'})_{v' \in A' \cup B'}, (R_e)_{e \in E'})$ denote the original instance.
We select the parameters as follows:
\begin{itemize}
\item $\lambda := 0.001 \min\{\delta, \nu\}$,
\item $p := \frac{1 - \lambda}{C}$,
\item $d_0 = \frac{10000}{\lambda^3}$,
\item $\chi := \frac{1}{\nu - 2\lambda}\left(\frac{1}{d_A} + \frac{t}{d_B}\right)$,
\item $R_0 = \max\left\{\left(\frac{e}{\chi}\right)^{1/\lambda}, 100^{1/\left(\frac{1}{d_A} + \frac{t}{d_B} - (\nu - \lambda)\cdot \chi\right)} \right\}$.
\item $n_E := d_A |A'|$. 
\end{itemize}

We construct the instance $\Pi''$ as follows:
\begin{enumerate}
\item First, let $E_1 \subseteq E'$ be a subset of edges where each edge in $E'$ is kept with probability $p$. 
\item Let $E'' = E_1$ and $G'' = (A', B', E'')$. 
\item For all $a \in A'$: If $\deg_{G''}(a) > d_A$, remove (arbitrary) $d_A - \deg_{G''}(a)$ edges adjacent to $a$ from $E''$.
\item For all $b \in B'$: If $\deg_{G''}(b) > d_B$, remove (arbitrary) $d_B - \deg_{G''}(b)$ edges adjacent to $b$ from $E''$.
\item Let $\Pi''$ be $(G' = (A', B', E''), (\Sigma_{v'})_{v' \in A' \cup B'}, (R_e)_{e \in E''})$.
\end{enumerate}

It is obvious by the construction that the instance is $(d_A, d_B)$-bounded degree. Before we prove the completeness and soundness of the reduction, let us briefly give probabilistic bounds on the sizes of $|E_1|$ and $|E_1 \setminus E''|$ that will be useful in both cases.

Let $G_1$ denote $(A', B', E_1)$; furthermore, let $X_e$ denote the indicator variable whether the edge $e$ is included in $E_1$.
Let $\cE_1$ denote the event that $|E_1| \in [(1 - 2\lambda) n_E, n_E]$.
First, we have $\E[|E_1|] = p|E'| = (1 - \lambda) n_E$. Meanwhile, $\Var[|E_1|] = p(1 - p)|E'| \leq p|E'| \leq n_E$. As a result, by Chebyshev's inequality, we have
\begin{align*}
\Pr[\neg \cE_1] \leq \frac{n_E}{\lambda^2 n_E^2} \leq \frac{1}{\lambda^2 n_E} \leq 0.01,
\end{align*}
where the last inequality is due to our choice of on $d_0$.

Let the event $\cE_2$ denote the event that $|E_1 \setminus E''| < \lambda |E_1|$. We have
\begin{align*}
\E[|E_1 \setminus E''|] \leq \sum_{a \in A'} \E\left[\deg_{G_1}(a) - \clip_{d_A}(\deg_{G_1}(a))\right] + \sum_{b \in B'} \E\left[\deg_{G_1}(b) - \clip_{d_B}(\deg_{G_1}(b))\right].
\end{align*}
Observe that $\deg_{G_1}(a)$ (resp. $\deg_{G_1}(b)$) is a sum of $d_A C$ (resp. $d_B C$) i.i.d. random variables with mean $p$. As such, we may apply \Cref{lem:clip} to arrive at
\begin{align*}
\E[|E_1 \setminus E''|]
&\leq |A'| \cdot \frac{(d_A C p)^2}{(d_A - d_A C p)^2} + |B'| \cdot \frac{(d_B C p)^2}{(d_B - d_B C p)^2}
\leq (|A'| + |B'|) \cdot \frac{1}{\lambda^2} 
\leq \frac{2n_E}{d_0 \lambda^2}
\leq 0.01 \lambda |E'|,
\end{align*}
where the last inequality follows from our choice of $d_0$. By Markov's inequality, we thus have
\begin{align*}
\Pr[\neg \cE_2] \leq 0.01.
\end{align*}

\paragraph{Completeness.} Henceforth, for any assignment $\psi$, we use the notation $E'(\psi)$ (resp. $E_1(\psi)$, $E''(\psi)$) to denote the set of edges in $E'$ (resp. $E_1, E''$) satisfied by $\psi$.

Let $\psi^*$ be such that $\val_{\Pi'}(\psi^*) = \val(\Pi')$. Let $\cE_3$ denote the event $E_1(\psi^*) \geq (\val(\Pi') - 2 \lambda) \cdot n_E$. Notice that $E_1(\psi^*)$ is exactly a subset of $E'(\psi^*)$ where each satisfied edge is included with probability $p$. As a result, we have $\E[|E_1(\psi^*)|] = p \cdot |E'(\psi^*)| = p \cdot |E| \cdot \val(\Pi') \geq n_E \cdot (\val(\Pi') - \lambda)$. Meanwhile, we have $\Var[|E_1(\psi^*)|] = p(1 - p) |E'(\psi^*)| \leq n_E \cdot \val(\Pi')$. Thus, by Chebyshev's inequality, we have
\begin{align*}
\Pr[\neg \cE_3] \leq \frac{n_E \cdot \val(\Pi')}{(\lambda n_E)^2} \leq \frac{1}{\lambda^2 n_E} \leq 0.01.
\end{align*}

Thus, by union bound, we have $\Pr[\cE_1 \wedge \cE_2 \wedge \cE_3] \geq 0.97$. When $\cE_1, \cE_2, \cE_3$ all occur, we have
\begin{align*}
\val(\Pi'') \geq \val_{\Pi''}(\psi^*) = \frac{|E''(\psi^*)|}{|E''|} 
\geq \frac{|E_1(\psi^*)| - |E_1 \setminus E''|}{|E_1|} 
&\geq \frac{(\val(\Pi') - 2 \lambda) \cdot n_E - \lambda \cdot n_E}{n_E} \\
&\geq \val(\Pi') - \delta.
\end{align*}

\paragraph{Soundness.}
Assume that $\val(\Pi'') \leq \frac{1}{R^\nu}$ and recall that $\chi = \frac{1}{\nu - 2\lambda}\left( \frac{1}{d_A} + \frac{t}{d_B} \right)$ is slightly smaller than the target soundness.
Consider any assignment $\psi$. Let $\cE_{\psi}$ denote the event that $|E_1(\psi)| \leq \chi \cdot n_E$. Again, notice that $E_1(\psi)$ is exactly a subset of $E'(\psi)$ where each satisfied edge is included with probability $p$. Note also that $|E'(\psi)| \leq \val(\Pi'') \cdot |E'|$, implying that $\E[|E_1(\psi)|] = p \cdot |E'(\psi)| \leq \val(\Pi'') \cdot n_E$. Thus, we may apply \Cref{thm:mult-chernoff} (with $\theta = \chi \cdot n_E$), to arrive at
\begin{align*}
\Pr[\neg \cE_{\psi}] &\leq \exp(\chi \cdot n_E - \val(\Pi'') \cdot n_E) \cdot \left(\frac{\val(\Pi'') \cdot n_E}{\chi \cdot n_E}\right)^{\chi \cdot n_E} 
\leq \left(\frac{e}{\chi \cdot R^{\nu}}\right)^{\chi \cdot n_E} 
\leq R^{-(\nu - \lambda)\cdot \chi \cdot n_E}
\end{align*}
where the third inequality is from our assumption that $R \geq R_0$.

Therefore, by taking the union bound over all (at most $R^{|A'|} \cdot R^{t|B'|}$) assignments $\psi$, we have
\begin{align*}
\Pr\left[\bigvee_{\psi} \neg \cE_\psi\right] \leq R^{|A'|} \cdot R^{t|B'|} \cdot R^{-(\nu - \lambda)\cdot \chi \cdot n_E}
= R^{\frac{n_E}{d_A} + t \cdot \frac{n_E}{d_B}-(\nu - \lambda)\cdot \chi \cdot n_E}
= (R^{n_E})^{\frac{1}{d_A} + \frac{t}{d_B} - (\nu - \lambda)\cdot \chi}
&\leq 0.01.
\end{align*}
Thus, by the union bound, $\cE_1, \cE_2$ and $\cE_{\psi}$ for all assignments $\psi$ occur simultanously with probability at least 0.97. When this is the case, we have
\begin{align*}
\val(\Pi'') 
= \max_{\psi} \frac{|E''(\psi)|}{|E''|}
\leq \frac{|E_1(\psi)|}{|E_1| - |E_1 \setminus E''|}
\leq \frac{\chi \cdot n_E}{(1 - \lambda)n_E} 
\leq \frac{1}{\nu - \delta}\left(\frac{1}{d_A} + \frac{t}{d_B}\right),
\end{align*}
which concludes our proof.
\end{proof}

\subsection{Putting Things Together: Proof of \Cref{thm:ug-2csp-main} and \Cref{thm:np-2csp-main}}

Our main theorems (\Cref{thm:ug-2csp-main,thm:np-2csp-main}) now follow easily from plugging in the reduction above to the existing large-gap hardness results for 2-CSPs (\Cref{thm:ug-2csp-unbounded-deg,thm:np-2csp-unbounded-deg}) with appropriate parameters.

\begin{proof}[Proof of \Cref{thm:ug-2csp-main}]
We will reduce from \Cref{thm:ug-2csp-unbounded-deg} with $\zeta = 0.01\eps$. Let $\Pi$ be any bipartite $(d_1, d_2)$-biregular 2-CSP instance with alphabet size $R$. We first apply \Cref{lem:copy} with $c_1 = c_2 = d$ to arrive at a $(d d_1d_2, d d_1d_2)$-biregular 2-CSP instance $\Pi'$ with the same alphabet size such that $\val(\Pi') = \val(\Pi)$. We then apply the reduction from \Cref{thm:subsampling-main} with $d_A = d_B = d$, $t = 1, \delta = 0.01\eps, \nu = 1 - \delta$ to arrive at a $d$-degree-bounded 2-CSP instance $\Pi''$. When $d$ is sufficiently large (depending on $\eps$ only) and $R$ is sufficiently large (depending on $d, \eps$), with probability 2/3, we have
\begin{itemize}
\item If $\val(\Pi) \geq 1 - \zeta$, then $\val(\Pi'') \geq 1 - \zeta - \delta = 1 - 0.02\eps$.
\item  If $\val(\Pi'') \leq \frac{1}{R^{1 - \delta}}$, then $\val(\Pi'') \leq \frac{1}{1 - 2\delta}\left(\frac{1}{d} + \frac{1}{d}\right) = \frac{1}{1 - 0.02\eps}\left(\frac{2}{d}\right)$.
\end{itemize}
Note that the ratio between the two cases are larger than $d(1/2 - \eps)$. Thus, if there is a polynomial-time $d(1/2 - \eps)$-approximation algorithm for 2-CSP on $d$-bounded-degree graphs, we can distinguish the two cases in randomized polynomial time (with two-sided error). Assuming UGC, from \Cref{thm:ug-2csp-unbounded-deg}, this implies that NP = BPP.
\end{proof}

\begin{proof}[Proof of \Cref{thm:np-2csp-main}]
We will reduce from \Cref{thm:np-2csp-unbounded-deg} with $\zeta = 0.01\eps$. Let $\Pi$ be any bipartite $(d_1, d_2)$-biregular 2-CSP instance with left alphabet size $R$ and right alphabet size $R^{1/2}$. We first apply \Cref{lem:copy} with $c_1 = c_2 = d$ to arrive at a $(d d_1d_2, d d_1d_2)$-biregular 2-CSP instance $\Pi'$ with the same left and right alphabet sizes such that $\val(\Pi') = \val(\Pi)$. We then apply the reduction from \Cref{thm:subsampling-main} with $d_A = d_B = d$, $t = 1/2, \delta = 0.01\eps, \nu = 1/2 - \delta$ to arrive at a $d$-degree-bounded 2-CSP instance $\Pi''$. When $d$ is sufficiently large (depending on $\eps$ only) and $R$ is sufficiently large (depending on $d, \eps$), with probability 2/3, we have
\begin{itemize}
\item If $\val(\Pi) \geq 1 - \zeta$, then $\val(\Pi'') \geq 1 - \zeta - \delta = 1 - 0.02\eps$.
\item  If $\val(\Pi'') \leq \frac{1}{R^{1/2 - \delta}}$, then $\val(\Pi'') \leq \frac{1}{1/2 - 2\delta}\left(\frac{1}{d} + \frac{1/2}{d}\right) = \frac{1}{1 - 0.04\eps}\left(\frac{3}{d}\right)$.
\end{itemize}
Note that the ratio between the two cases are larger than $d(1/3 - \eps)$. Thus, if there is a polynomial-time $d(1/3 - \eps)$-approximation algorithm for 2-CSP on $d$-bounded-degree graphs, we can distinguish the two cases in randomized polynomial time (with two-sided error). From \Cref{thm:np-2csp-unbounded-deg}, this implies that NP = BPP.
\end{proof}

\section{Hardness of Maximum Independent Set in $k$-Claw-Free Graphs}

We next move on to prove hardness of Maximum Independent Set in $k$-claw-free graphs. To do so, let us first recall the reduction from Max 2-CSP with bounded degree from \cite{DvorakFRR23}. As touched on briefly in the introduction, the version we state below is actually more flexible than that in \cite{DvorakFRR23} as it allows the degree bounds of the two sides to be different.

\begin{lemma}[\cite{DvorakFRR23}] \label{lem:red-fglss}
There is a polynomial-time reduction that takes in a $(d_A, d_B)$-bounded degree bipartite 2-CSP instance $\Pi = (G = (A, B, E), (\Sigma_v)_{v \in A \cup B}, (R_e)_{e \in E})$ and produces a $(d_A + d_B)$-claw-free graph $G^* = (V^*, E^*)$ such that $\indp(G^*) = \val(\Pi) \cdot |E|$. 
\end{lemma}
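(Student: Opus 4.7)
The plan is to use the FGLSS reduction in its bipartite form. Define $V^*$ to consist of one vertex $v_{e, \sigma_a, \sigma_b}$ for each edge $e = (a, b) \in E$ and each satisfying pair $(\sigma_a, \sigma_b) \in R_e$; put an edge in $E^*$ between $v_{e, \sigma_a, \sigma_b}$ and $v_{e', \sigma'_{a'}, \sigma'_{b'}}$ whenever the two assignments conflict on a shared variable (in the bipartite setting, this means $a = a'$ with $\sigma_a \neq \sigma'_{a'}$, or $b = b'$ with $\sigma_b \neq \sigma'_{b'}$). The construction is clearly polynomial-time. I then plan to establish two things: the value equality $\indp(G^*) = \val(\Pi) \cdot |E|$, and the claw-freeness.

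For the value equality, the forward direction takes an optimal assignment $\psi$ of $\Pi$ and picks the vertex $v_{e, \psi_a, \psi_b}$ for every edge $e = (a, b)$ satisfied by $\psi$; these vertices are pairwise consistent by construction, giving an independent set of size $\val(\Pi) \cdot |E|$. For the reverse direction, given an independent set $I$, note that any two vertices arising from the same edge $e$ are adjacent (they share both $a$ and $b$ and must differ on one of them), so $I$ contains at most one vertex per edge; moreover, vertices of $I$ sharing a variable must agree on it. Thus $I$ induces a consistent partial assignment which, extended arbitrarily to a full assignment $\psi$, satisfies at least $|I|$ edges, giving $\val(\Pi) \cdot |E| \geq |I|$.

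The main technical step, and the one I expect to be the delicate part, is showing $G^*$ is $(d_A + d_B)$-claw-free. Fix any $v = v_{(a,b), \sigma_a, \sigma_b}$ and an independent set $I$ in its neighborhood; I will show $|I| \leq d_A + d_B - 1$. Partition $I$ into $I_a$ (neighbors conflicting with $v$ through variable $a$) and $I_b$ (conflicting through $b$), with a possible overlap consisting of vertices on the edge $e$ itself that disagree with $v$ on both coordinates. Inside $I_a$, all vertices assign the same value $\sigma'_a \neq \sigma_a$ to $a$ (otherwise two of them would conflict on $a$), and at most one vertex of $I_a$ can sit on any given edge incident to $a$; hence $|I_a| \leq d_A$, and symmetrically $|I_b| \leq d_B$. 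To rule out equality everywhere, suppose $|I_a| = d_A$ and $|I_b| = d_B$: then the edge $e$ itself contributes a vertex $v_{(a,b), \sigma'_a, *}$ to $I_a$ and a vertex $v_{(a,b), *, \sigma''_b}$ to $I_b$ with $\sigma''_b \neq \sigma_b$. Either these two vertices coincide (so $|I_a \cap I_b| \geq 1$ and $|I| \leq d_A + d_B - 1$) or they are distinct vertices on the same edge $e$, which forces them to be adjacent, contradicting independence.

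Combining these three ingredients yields the lemma. The main obstacle is precisely the last case analysis for claw-freeness: the naive union bound $|I_a| + |I_b|$ only gives $d_A + d_B$, and refining it to $d_A + d_B - 1$ requires the observation that saturation on both sides would force a conflict inside the edge $e$. I do not anticipate any other subtlety; the remaining verifications (that $G^*$ is polynomial-size and that $|I| = \indp(G^*)$ is achieved by a genuine full assignment) are routine.
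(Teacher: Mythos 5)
Your proposal is correct and uses the same FGLSS reduction as the paper; the value equality argument is the routine one and matches. The only place you diverge is in the claw-freeness proof, and there the difference is organizational rather than substantive. The paper observes that every neighbor of $v = (a,b,\sigma_a,\sigma_b)$ lives on some edge of $E_0 := \{ e' \in E : e' \text{ touches } a \text{ or } b\}$, that two distinct vertices on the same edge are always adjacent in $G^*$, and that $|E_0| \le d_A + d_B - 1$ because $e = (a,b)$ is counted once on each side; a single pigeonhole then finishes. Your argument instead splits the putative independent neighborhood into $I_a$ and $I_b$, bounds each by $d_A$ and $d_B$ respectively, and then handles the tight case $|I_a| = d_A$, $|I_b| = d_B$ by observing that both parts would have to be saturated on edge $e$, forcing either an overlap or a conflict. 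This is the same double-counting of $e$ in disguise, just unrolled into a case analysis, so the paper's version is shorter; you might find it cleaner to absorb your $I_a/I_b$ split into the single observation about $E_0$. One small point to make explicit in your argument: the step ``$|I_a| = d_A$ forces a vertex of $I_a$ on edge $e$'' tacitly uses $|I_a| \le \deg(a) \le d_A$ with equality throughout, so $\deg(a) = d_A$ and every edge at $a$ is used; this is correct but worth stating, since $\deg(a) < d_A$ is allowed under the ``bounded-degree'' hypothesis.
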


\begin{proof}
The reduction is exactly as the so-called ``FGLSS graph'' \cite{FeigeGLSS96}:
\begin{itemize}
\item For every edge $e = (a, b) \in E$ and every $(\sigma_a, \sigma_b) \in R_e$, create a vertex $(a, b, \sigma_a, \sigma_b)$ in $V^*$.
\item Create an edge between $(a, b, \sigma_a, \sigma_b)$ and $(a', b', \sigma'_{a'}, \sigma'_{b'})$ iff they are inconsistent, i.e. there exists $v' \in \{a, b\} \cap \{a', b'\}$ such that $\sigma_v \ne \sigma'_{v'}$.
\end{itemize}

A standard argument shows that $\indp(G^*) = \val(\Pi) \cdot |E|$. To see that it is $(d_A + d_B)$-claw-free, consider any vertex $(a, b, \sigma_a, \sigma_b) \in V^*$ and $d_A + d_B$ of its neighbors in $G^*$. Let $E_0 \subseteq E$ denote the set of edges (in the constraint graph $G$) adjacent to $a$ or $b$ (or both). By the degree constraint, $|E_0| \leq d_A + d_B - 1$. Thus, by the pigeonhole principle, at least two neighbors of the $d_A + d_B$ neighbors correspond to the same edge in $E_0$; this means that there must be an edge between these two vertices in $G^*$. Thus, the graph $G^*$ is $(d_A + d_B)$-claw-free.
\end{proof}

\Cref{thm:ug-claw-free-main} is now an immediate consequence of \Cref{thm:ug-2csp-main} and \Cref{lem:red-fglss}.

\begin{proof}[Proof of \Cref{thm:ug-claw-free-main}]
Note that the instance produced in \Cref{thm:ug-2csp-main} is a $(d, d)$-bounded-degree instance. Thus, by setting $d = \lfloor k/2 \rfloor$ and plugging it into the reduction in \Cref{lem:red-fglss}, we arrive at the claimed hardness result.
\end{proof}

For \Cref{thm:np-claw-free-main}, we need to work harder to optimize the hardness of approximation factor. Specifically, we set $d_A \approx \sqrt{2} \cdot d_B$, as formalized below.

\begin{proof}[Proof of \Cref{thm:np-claw-free-main}]
Let $q_1, q_2 \in \N$ be integers such that $\left|\frac{q_1}{q_2} - \sqrt{2}\right|$ and $\left|\frac{q_2}{q_1} - \frac{1}{\sqrt{2}}\right| \leq 0.01\eps$. Let $d_A = \lfloor \frac{k q_1}{q_1 + q_2} \rfloor$ and $d_B = \lfloor \frac{k q_2}{q_1 + q_2} \rfloor$. Note that $d_A + d_B \leq k$.

We will reduce from \Cref{thm:np-2csp-unbounded-deg} with $\zeta = 0.01\eps$. Let $\Pi$ be any bipartite $(d_1, d_2)$-biregular 2-CSP instance with left alphabet size $R$ and right alphabet size $R^{1/2}$. We first apply \Cref{lem:copy} with $c_1 = d_A, c_2 = d_B$ to arrive at a $(d_A d_1d_2, d_B d_1d_2)$-biregular 2-CSP instance $\Pi'$ with the same left and right alphabet sizes such that $\val(\Pi') = \val(\Pi)$. We then apply the reduction from \Cref{thm:subsampling-main} with $d_A, d_B$ as specified above, $t = 1/2, \delta = 0.01\eps, \nu = 1/2 - \delta$ to arrive at a $d$-degree-bounded 2-CSP instance $\Pi''$. Finally, we apply reduction in \Cref{lem:red-fglss} on $\Pi''$ to arrive at the graph $G^*$. By \Cref{lem:red-fglss}, $G^*$ is $k$-claw-free. Furthermore, when $k$ is sufficiently large (depending on $\eps$ only) and $R$ is sufficiently large (depending on $k, \eps$), with probability 2/3, we have
\begin{itemize}
\item If $\val(\Pi) \geq 1 - \zeta$, then $\indp(G^*) = |E''| \cdot \val(\Pi'') \geq |E''| \cdot (1 - \zeta - \delta) = |E''| \cdot (1 - 0.02\eps)$.
\item  If $\val(\Pi'') \leq \frac{1}{R^{1/2 - \delta}}$, then 
\begin{align*}
\indp(G^*) = |E''| \cdot \val(\Pi'') &\leq |E''| \cdot  \frac{1}{1/2 - 2\delta}\left(\frac{1}{d_A} + \frac{1/2}{d_B}\right) \\
&= \frac{1}{1/2 - 2\delta} \left(\frac{1}{\frac{kq_1}{q_1 + q_2} - 1} + \frac{1/2}{\frac{kq_2}{q_1 + q_2} - 1}\right) \cdot |E''| \\
&\leq \frac{1}{1/2 - 2\delta} \cdot \frac{1}{1 - \delta}  \left(\frac{q_1 + q_2}{k q_1} + \frac{(q_1 + q_2)/2}{kq_2}\right) \cdot |E''| \\
&\leq \frac{1}{1/2 - 2\delta} \cdot \frac{1}{1 - \delta} \cdot \frac{1}{k}\left(\frac{3}{2} + \frac{q_2}{q_1} + \frac{q_1}{2q_2}\right) \cdot |E''| \\
&\leq \frac{1}{1/2 - 2\delta} \cdot \frac{1}{1 - \delta} \cdot \frac{1}{k}\left(\frac{3}{2} + \sqrt{2} + 2\delta\right) \cdot |E''| \\
&\leq \frac{1}{1 - 4\delta} \cdot \frac{1}{k} \cdot (3 + 2\sqrt{2} + 4\delta) \cdot |E''|,
\end{align*}
where the second inequality holds when we assume that $k$ is sufficiently large and the second-to-last inequality is from our choice of $q_1, q_2$.
\end{itemize}
Note that the ratio between the two cases are larger than $k\left(\frac{1}{3 + 2\sqrt{2}} - \eps\right)$. Thus, if there is a polynomial-time $k\left(\frac{1}{3 + 2\sqrt{2}} - \eps\right)$-approximation algorithm for maximum independent set on $k$-claw-free graphs, we can distinguish the two cases in randomized polynomial time (with two-sided error). Assuming UGC, from \Cref{thm:ug-2csp-unbounded-deg}, this implies that NP = BPP.
\end{proof}

\section{Conclusion and Open Questions}

In this paper, we prove hardness of approximation results for Max 2-CSP with bounded degree. Our UG-hardness is nearly tight as the maximum degree goes to $\infty$. Using this, we also give hardness for Maximum Independent Set on $k$-claw-free graphs whose inapproximation ratio is within a factor of 2 of optimal for any sufficiently large $k$. It remains an intriguing open question to close this latter gap. Furthermore, since our reductions are randomized, it would be interesting to derandomized them. Finally, one of our motivations to study Maximum Independent Set on $k$-claw-free graphs is to understand $k$-Set Packing. However, we are unable to obtain $\Omega(k)$ factor hardness of approximation of the latter using the reductions in this paper. As stated earlier, the best (NP-)hardness of approximation for $k$-Set Packing remains $\Omega(k/\log k)$~\cite{HazanSS06} and it would be interesting to close (or at least decrease) this $O(\log k)$ gap between the upper and lower bounds.

\ifarxiv
\subsection*{Acknowledgement}

This work was initiated at Dagstuhl Seminar 23291 ``Parameterized Approximation: Algorithms and Hardness''. We thank the organizers and participants of the workshop for helpful discussions.
\fi

\bibliographystyle{alpha}
\bibliography{ref}

\appendix

\section{UGC-Hardness of 2-CSP with Almost Perfect Completeness}
\label{sec:ug-2csp}

In this section, we prove~\Cref{thm:ug-2csp-unbounded-deg}. 
It follows from a standard technique proving hardness of CSPs assuming the Unique Games Conjecture~\cite{KhotKMO07}. 

\begin{proof}[Proof of \Cref{thm:ug-2csp-unbounded-deg}]

For given $R \in \N$, we will construct a {\em predicate} $P \subseteq [R] \times [R]$, and consider CSP$(P)$ whose instance $\Pi = (G = (V, E), (\Sigma_v)_{v \in V}, (R_{e})_{e \in E})$ must satisfy $\Sigma_v = [R]$ for every $v \in V$ and $R_e = \{ (x, y) \in [R]^2 : (x \oplus t_{e, u}, y \oplus t_{e, v}) \in P \}$ for some $t_{e, u}, t_{e, v} \in [R]$ for every $e = (u, v) \in E$. (For $x, y \in [R]$, we define $x \oplus y$ to be $x + y$ if it is at most $R$ and $x + y - R$ otherwise.)

The standard technique of proving the hardness of CSP$(P)$ due to Khot et al.~\cite{KhotKMO07} shows that it suffices to consider the {\em dictatorship test}. 
It is determined by a distribution $\mu$ supported on $[R] \times [R]$.
For every $L \in \N$, it yields the following test that decides whether a given function $F : [R]^L \to [R]$ is a {\em dictator} or not.

\begin{itemize}
    \item For each $i \in [L]$, sample $(x_{i}, y_{i}) \in [R]^2$ from $\mu$, independently from other $i$'s. 

    \item Accept if $(F(x), F(y)) \in P$. 
\end{itemize}

Note that if $F$ is a dictator function (i.e., $F(x) = x_i$ for some $i \in [L]$), then the above test accepts with probability exactly $\Pr_{(x, y) \sim \mu} [(x, y) \in P] =: c$, known as the completeness of the test. 
Let $s$ be (an upper bound of) the soundness of this test; there exist $\tau > 0$ and $d \in \N$ such that any function $F : [R]^L \to [R]$, which (1) is {\em balanced} (i.e., $|F^{-1}(i)| = R^{L-1}$ for all $i \in [R]$) and (2) has the {\em maximum degree-$d$ influence} (defined in~\Cref{sec:fourier}) at most $\tau$, passes the test with probability at most $s$. Khot et al.~\cite{KhotKMO07} shows that a dictatorship test with some $c$ and $s$ immediately yields the hardness of CSP($P$) with almost the same completeness and soundness. 

\begin{theorem}[\cite{KhotKMO07}] Given $P \subseteq [R]^2$, let $\mu$ be a distribution over $[R]^2$ that yields a dictatorship test with completeness $c$ and soundness $s$. Then, for any $\zeta > 0$, assuming the Unique Games Conjecture, it is NP-hard, given a regular CSP($P$) instance $\Pi$, to distinguish between the following two cases:
\begin{itemize}
    \item (Yes Case) $\val(\Pi) \geq c - \zeta$.
    \item (No Case) $\val(\Pi) \leq s + \zeta$.
\end{itemize}

\end{theorem}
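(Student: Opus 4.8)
The plan is to carry out the standard long-code reduction from Unique Games to $\mathrm{CSP}(P)$ in the style of~\cite{KhotKMO07}. Start from a biregular bipartite Unique Games instance $\cU$ with label set $[L]$ and bijective constraints $\{\pi_{w\to u}\}$; by the UGC it is NP-hard, for every $\eta_1,\eta_2>0$, to tell $\val(\cU)\ge 1-\eta_1$ from $\val(\cU)\le\eta_2$ (here I write $\val(\cU)$ for the fraction of satisfiable edges). Construct the $\mathrm{CSP}(P)$ instance $\Pi$ by attaching to each vertex $u$ on one side of $\cU$ a block of variables indexed by $[R]^L$, with alphabet $[R]\cong\Z_R$, and folding each block over the diagonal $\{(c,\dots,c):c\in\Z_R\}$. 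Folding is exactly what produces the shifts $x\oplus t_{e,u}$ in the constraints $R_e$ and forces every surviving assignment $F_u:[R]^L\to[R]$ to be balanced, i.e.\ $|F_u^{-1}(a)|=R^{L-1}$. The constraints of $\Pi$ are generated by: sampling a center $w$ on the other side together with two \emph{independent} uniform neighbors $u,v$ of $w$; sampling $(x_i,y_i)\sim\mu$ i.i.d.\ over $i\in[L]$; and accepting iff $(F_u(x^{\pi_{w\to u}}),F_v(y^{\pi_{w\to v}}))\in P$, where $z^{\pi}$ denotes $z$ with its coordinates permuted by $\pi$ (with the convention under which a labeling $\ell$ satisfies edge $(w,u)$ iff $\ell(u)=\pi_{w\to u}(\ell(w))$). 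A standard duplication argument turns this weighted object into a regular $\mathrm{CSP}(P)$ instance. The conditional independence of $u$ and $v$ given $w$ is the feature that makes the soundness analysis work.

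\textbf{Completeness.} Given a labeling $\ell$ of $\cU$ satisfying a $(1-\eta_1)$-fraction of edges, assign to each $u$ the folding-consistent dictator $F_u(z)=z_{\ell(u)}$. By Markov's inequality a $(1-O(\sqrt{\eta_1}))$-fraction of centers $w$ have both random neighbors satisfied by $\ell$; for such $w$, both $F_u(x^{\pi_{w\to u}})$ and $F_v(y^{\pi_{w\to v}})$ read off coordinate $\ell(w)$, so the test reduces to checking $(x_{\ell(w)},y_{\ell(w)})\in P$ for $(x_{\ell(w)},y_{\ell(w)})\sim\mu$, which succeeds with probability $c$. Hence $\val(\Pi)\ge c\bigl(1-O(\sqrt{\eta_1})\bigr)\ge c-\zeta$ once $\eta_1$ is small enough.

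\textbf{Soundness.} Suppose $\val(\Pi)\ge s+\zeta$ and fix a witnessing assignment; set $f_{u,a}:=\ind[F_u=a]$, so $f_{u,a}\ge 0$, $\sum_a f_{u,a}\equiv 1$, and $\E[f_{u,a}]=1/R$ by folding. By averaging, a $\ge\zeta/2$ fraction of centers $w$ are \emph{good}: conditioned on such $w$, the acceptance probability exceeds $s+\zeta/2$. Using independence of $u,v$ given $w$, this conditional acceptance probability equals $\sum_{(a,b)\in P}\E_{(x,y)\sim\mu^{\otimes L}}[\bar f_{w,a}(x)\,\bar f_{w,b}(y)]$, where $\bar f_{w,a}(z):=\E_{u\sim w}\bigl[f_{u,a}(z^{\pi_{w\to u}})\bigr]$; that is, it is precisely the dictatorship test's acceptance probability on the ensemble $\{\bar f_{w,a}\}_a$, which still satisfies $\bar f_{w,a}\ge 0$, $\sum_a\bar f_{w,a}\equiv 1$, and $\E[\bar f_{w,a}]=1/R$. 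Invoking the soundness guarantee of the test, a good $w$ admits a coordinate $i(w)$ and a value $a(w)$ with $\Inf_{i(w)}^{\le d}(\bar f_{w,a(w)})\ge\tau$. Degree-$d$ influence is convex, and under coordinate permutation $\Inf_i^{\le d}(g^{\pi})=\Inf_{\pi(i)}^{\le d}(g)$, so $\E_{u\sim w}\,\Inf_{\pi_{w\to u}(i(w))}^{\le d}(f_{u,a(w)})\ge\tau$, and hence a $\ge\tau/2$ fraction of neighbors $u$ of $w$ have $\Inf_{\pi_{w\to u}(i(w))}^{\le d}(f_{u,a(w)})\ge\tau/2$, so $\pi_{w\to u}(i(w))\in\mathrm{Cand}(u):=\{j:\exists a,\ \Inf_j^{\le d}(f_{u,a})\ge\tau/2\}$.

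\textbf{Decoding, and the expected obstacle.} Since $\sum_j\Inf_j^{\le d}(g)\le d\,\|g\|_2^2$, each set $\mathrm{Cand}(u)$ (and the analogous candidate set of each center) has constant size $M$; label each $u$ by a uniform element of $\mathrm{Cand}(u)$ (arbitrary if empty) and each center $w$ by a uniform element of its candidate set. For a good $w$: with probability $\ge 1/M$ its label is $i(w)$, and then a $\ge\tau/2$ fraction of its edges reach a $u$ whose candidate set contains $\pi_{w\to u}(i(w))$, which is chosen as $\ell(u)$ with probability $\ge 1/M$, making that edge satisfied; so in expectation at least $\eta_2:=\frac{\zeta\tau}{4M^2}$ fraction of edges of $\cU$ are satisfied. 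Thus $\val(\Pi)\ge s+\zeta$ forces $\val(\cU)\ge\eta_2$; combined with completeness and the UGC applied with this $\eta_2$ and a sufficiently small $\eta_1$, the claimed NP-hardness follows. The genuinely delicate point is the application of the dictatorship-test soundness to the \emph{fractional} averaged ensembles $\{\bar f_{w,a}\}$ rather than to a single deterministic balanced function as literally stated; this is handled by using the form of the soundness guarantee that the underlying invariance-principle argument actually yields — one for nonnegative ensembles summing to $1$ — together with the observation that folding preserves $\E[\bar f_{w,a}]=1/R$ after averaging over neighbors.
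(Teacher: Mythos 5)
The paper never proves this statement itself: it is imported verbatim from \cite{KhotKMO07} as a black box, so there is no internal proof to compare against. Your sketch is the standard long-code/dictatorship-test reduction from Unique Games that underlies the cited result, and it is correct in outline: folding to enforce balance and produce the shifted constraints of CSP($P$), dictator assignments for completeness, and averaging over neighbors plus low-degree-influence decoding for soundness. You also correctly identify the only delicate point, namely that the soundness guarantee must be invoked for the averaged $[0,1]$-valued ensembles $\{\bar f_{w,a}\}$ with $\E[\bar f_{w,a}]=1/R$ rather than for a single balanced function, which is exactly the form the invariance-principle argument (and the paper's own soundness analysis in its appendix) actually provides.
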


Though the above theorem does not guarantee that a given instance $\Pi$ with the underlying graph $G = (V, E)$ is bipartite, one can easily convert it to a bipartite instance $\Pi'$ by creating two vertex sets $V_1$ and $V_2$ which are disjoint copies of $V$ and replace each constraint $e = (u, v)$ with $R_e \subseteq [R]^2$ with two constraints $(u_1, v_2)$ and $(u_2, v_1)$
with the same $R_e$, where $u_i, v_i$ denote the copy of $u, v$ in $V_i$. The completeness of the new instance is at least the completeness of the original instance, and the soundness of the new instance is at most twice of the soundness of the original instance. 

Therefore, the rest of the section is devoted to constructing $P$ and $\mu$ such that $\mu$ is supported by $P$ (so that $c = 1$) while $s = O(\log^2 R / R)$. 

\subsection{Predicate and Completeness}
Given $R \in \N$, let $t \in \N$ be a parameter to be determined later, and let $H = (V_H, E_H)$ be a $t$-regular graph with $V_H = [R]$ such that the second largest eigenvalue of the normalized adjacency matrix is $O(1/\sqrt{t})$~\cite{friedman2008proof}. The predicate $P \subseteq [R]^2$ is defined such that $(i, j) \in P$ if and only if $( i, j ) \in E_H$. This ensures that $|P| = tR = 2|E_H|$. Then, our distribution $\mu$ is simply the uniform distribution over $P$. By definition, the completeness value is $\Pr_{(x, y) \sim \mu} [(x, y) \in P] = 1$. 

\subsection{Soundness via Fourier analysis}
\label{sec:fourier}
To (formally define and) analyze the soundness of the test, we use the following standard tools from Gaussian bounds for correlated functions from Mossel~\cite{Mossel10}.
We define the correlation between two correlated spaces below.
\begin{definition}
Given a distribution $\mu$ on $\Omega_1 \times \Omega_2$, the correlation $\rho(\Omega_1, \Omega_2; \mu)$ is defined as
\[
\rho(\Omega_1, \Omega_2; \mu) = \sup \left\{ \mathsf{Cov}[f, g] : f : \Omega_1 \rightarrow \mathbb{R}, g : \Omega_2 \rightarrow \mathbb{R}, \mathsf{Var}[f] = \mathsf{Var}[g] = 1 \right\}.
\]
\end{definition}
In our case, $\rho := \rho([R], [R]; \mu)$ is exactly the second largest eigenvalue of the normalized adjacency matrix of $H$, which is $O(1/\sqrt{t})$. 

\begin{definition} [\cite{Mossel10}]
For any function $f : [R]^L \to\R$, the  {\em Efron-Stein} decomposition is given by 
\[
f(y) = \sum_{S \subseteq [L]} f_S (y)
\]
where the functions $f_S$ satisfy 
\begin{itemize}
    \item $f_S$ only depends on $y_S$, the restriction of $y$ to the coordinates of $S$.
    \item For all $S \not\subseteq S'$ and all $z_{S'}$, $\E_y[f_S(y) | y_{S'} = z_{S'}] = 0$. 
\end{itemize}
\end{definition}
Based on the Efron-Stein decomposition, we can define (low-degree) influences of a function. For a function $f : [R]^L \to \R$ and $p \geq 1$, let $\| f \|_p := \E[|f(y)|^p]^{1/p}$.

\begin{definition} [\cite{Mossel10}]\label{def:influence}
For any function $f : [R]^L \to \R$,
its {\em $i$th influence} is defined as 
\[
\Inf_i(f) := \sum_{S : i \in S} \| f_S \|_2^2.
\]
Its {\em $i$th degree-$d$ influence} is defined as 
\[
\Inf^{\leq d}_i(f) := \sum_{S : i \in S, |S| \leq d} \| f_S \|_2^2,
\]
\end{definition}

Given a discrete-valued function $F : [R]^L \to [R]$, for every $i \in [R]$, we let $F_i : [R]^L \to \{ 0, 1 \}$ such that $F_i(x) = 1$ if $F(x) = i$ and $0$ otherwise. We say that $F$ has the maximum degree-$d$ influence at most $\tau$ if $\Inf^{\leq d}_j(F_i) \leq \tau$ for every $i \in [R]$ and $j \in [L]$. 

For $a, b \in [0, 1]$ and $\sigma \in [0, 1]$, let $\Gamma_{\sigma}(a, b) := \Pr[g_1 \leq \Phi^{-1}(a), g_2 \leq \Phi^{-1}(b)]$ where $g_1, g_2$ are $\sigma$-correlated standard Gaussian variables and $\Phi$ denotes the cumulative density function of a standard Gaussian. (E.g., $\Gamma_{\sigma}(a, 1) = a$ for any $\sigma$ and $\Gamma_0(a, b) = ab$.) We crucially use the following invariance principle applied to our dictatorship test. 

\begin{theorem} [\cite{Mossel10}]
For any $\eps > 0$ there exist $d \in \N$ and $\tau > 0$ such that the following is true. Let $f, g : [R]^L \to [0, 1]$. If $\min(\Inf^{\leq d}_i[f], \Inf^{\leq d}_i[g]) \leq \tau$ for every $i \in [L]$, \[
\E_{(x, y) \sim \mu^{\otimes L}}[f(x)g(y)] \leq \Gamma_{\rho}(\E[f(x)], \E[g(y)]) + \eps.
\]
\label{thm:invariance}
\end{theorem}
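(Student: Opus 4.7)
The plan is to follow the standard three-step invariance/noise-stability strategy: (i) smooth $f$ and $g$ to make them effectively low-degree, (ii) run a Lindeberg-style hybrid argument to replace the discrete coordinates $(x_i, y_i) \sim \mu$ by correlated Gaussians one coordinate at a time, and (iii) finish in Gaussian space with Borell's isoperimetric inequality, which is precisely what produces the function $\Gamma_\rho$ in the target bound. The smoothing parameter $\gamma$, the degree cutoff $d$, and the influence threshold $\tau$ will be chosen so that each step incurs error at most $\eps/3$.

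For the smoothing step, fix $\gamma > 0$ and apply the noise operator $T_{1-\gamma}$ coordinatewise to $f$ and $g$. Using the Efron--Stein decomposition from \Cref{def:influence}, $T_{1-\gamma} f = \sum_S (1-\gamma)^{|S|} f_S$, so the contribution of parts of degree greater than $d$ is at most $(1-\gamma)^{2d} \|f\|_2^2$, which is negligible once $d = \Theta(\log(1/\gamma)/\gamma)$. Swapping $f$ for $T_{1-\gamma} f$ inside $\E[f(x) g(y)]$ changes it by only $O(\gamma)$ since $\mu$ has correlation $\rho < 1$ and $T_{1-\gamma}$ is a contraction on the positive-degree Efron--Stein components. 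Moreover, the low-degree influences $\Inf_i^{\leq d}(T_{1-\gamma} f)$ are dominated by $\Inf_i^{\leq d}(f)$, so the influence hypothesis carries over to the smoothed functions.

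For the hybrid step, number the $L$ coordinates and form $L+1$ intermediates in which the first $k$ coordinates have been replaced by jointly Gaussian pairs $(X_i, Y_i)$ whose covariance on the Efron--Stein basis matches that of $(x_i, y_i)$ (so that the induced correlation is again $\rho$). A Taylor expansion to third order, combined with a smooth approximation of the truncation to $[0,1]$, shows that each swap costs at most $C(d) \cdot \min(\Inf^{\leq d}_i(f), \Inf^{\leq d}_i(g))^{1/2} \cdot (\Inf^{\leq d}_i(f) + \Inf^{\leq d}_i(g))$; summing over $i$ and using $\sum_i \Inf^{\leq d}_i(h) \leq d$ for any degree-$d$ $h$, together with the hypothesis that the coordinatewise minimum is at most $\tau$, gives total hybrid error at most $C(d) \sqrt{\tau} \cdot d$, which is $\leq \eps/3$ for $\tau$ sufficiently small depending on $\eps$ and $d$.

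After the hybrid argument, the remaining quantity is $\E[\tilde f(X) \tilde g(Y)]$ where $X, Y$ are $\rho$-correlated standard Gaussian vectors and $\tilde f, \tilde g$ are (after undoing the smooth truncation) $[0,1]$-valued, and Borell's Gaussian isoperimetric inequality yields $\E[\tilde f(X) \tilde g(Y)] \leq \Gamma_\rho(\E[\tilde f], \E[\tilde g])$; since first moments are preserved throughout, $\E[\tilde f] = \E[f]$ and $\E[\tilde g] = \E[g]$, and the three $\eps/3$ error budgets combine to the stated $\eps$ slack. The main obstacle is the Lindeberg swap in step (ii): because $f, g$ are only $[0,1]$-valued rather than smooth, one must balance a mollification of the truncation against the third-moment error from the Taylor expansion, and this delicate book-keeping is exactly the technical heart of Mossel's invariance principle, which is why the statement is invoked as a black box from \cite{Mossel10}.
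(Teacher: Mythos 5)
The paper does not prove this statement; it is invoked as a black-box citation to \cite{Mossel10}, so there is no internal argument of the paper to compare against. Your sketch correctly outlines the standard three-step strategy behind Mossel's correlated-spaces invariance principle -- (i) damp high-degree Efron--Stein components with a noise operator $T_{1-\gamma}$, (ii) a Lindeberg-type coordinate-by-coordinate swap from $\mu^{\otimes L}$ to jointly Gaussian pairs, with a smooth mollification of the $[0,1]$-truncation to control the third-order Taylor error in terms of low-degree influences, and (iii) conclude in Gaussian space via Borell's isoperimetric inequality, which is exactly where $\Gamma_\rho$ arises. You also correctly observe at the end that this is why the paper treats the result as a black box rather than re-deriving it.

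A few of your quantitative claims are stated more loosely than Mossel's actual bounds: the per-coordinate swap error is not literally of the form $C(d)\cdot\min(\cdot)^{1/2}\cdot(\Inf+\Inf)$, and the smoothing error is controlled using that $\rho$ is bounded away from $1$ in a slightly more involved way than ``changes it by only $O(\gamma)$.'' These are expository imprecisions rather than gaps in the high-level plan; making them precise is exactly the ``delicate book-keeping'' you flag, and it is carried out in full in \cite{Mossel10}. Since the paper's role for this theorem is purely as an imported tool (used once, to bound the acceptance probability of the dictatorship test by $\Gamma_\rho(1/R,1/R)$ plus error), nothing more than the citation is needed there, and your summary is consistent with that usage.
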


We will use the following upper bound on $\Gamma_{\rho}(\alpha, \alpha)$. 
\begin{lemma} [Corollary 3 of \cite{KhotKMO07}] For any $R \geq 1$ and $\rho \in (0, 1/20)$, 
\[
\Gamma_{\rho}(1/R, 1/R) \leq (1/R)^{1 + \frac{1 - \rho}{1 + \rho}} \leq 
(1/R)^{2 - 2\rho}.
\]
\end{lemma}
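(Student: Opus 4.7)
The statement chains two inequalities, which I would split and handle separately. The right-hand inequality simplifies via the identity $1 + \frac{1-\rho}{1+\rho} = \frac{2}{1+\rho}$: since $1/R \leq 1$, it reduces to $\frac{2}{1+\rho} \geq 2 - 2\rho$, which on clearing denominators becomes $1 \geq 1 - \rho^2$.

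For the substantive inequality $\Gamma_\rho(1/R, 1/R) \leq (1/R)^{2/(1+\rho)}$, my plan is to invoke Gaussian hypercontractivity. Let $t := \Phi^{-1}(1/R)$ and let $f := \ind_{(-\infty, t]}$ be the indicator of a half-line of standard Gaussian measure $1/R$, so that $\|f\|_p = (1/R)^{1/p}$ for every $p \geq 1$. Writing $T_\rho$ for the Ornstein--Uhlenbeck operator $(T_\rho h)(x) := \E_z[h(\rho x + \sqrt{1-\rho^2}\, z)]$ with $z$ a standard Gaussian, a standard calculation identifies $\Gamma_\rho(1/R, 1/R) = \E_g[f(g) \cdot (T_\rho f)(g)]$. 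I would then apply H\"older with the specific conjugate pair $p = 1+\rho$ and $p' = (1+\rho)/\rho$:
\[
\Gamma_\rho(1/R, 1/R) \leq \| f \|_p \cdot \| T_\rho f \|_{p'},
\]
and invoke Nelson's hypercontractivity theorem, which states that $T_\rho : L^p \to L^{p'}$ is a contraction whenever $\rho^2 \leq (p-1)/(p'-1)$. A direct check gives $(p-1)/(p'-1) = \rho \cdot \rho = \rho^2$ at these exponents, so the Nelson condition holds with equality, yielding $\| T_\rho f \|_{p'} \leq \| f \|_p$. Combining, $\Gamma_\rho(1/R, 1/R) \leq \| f \|_p^2 = (1/R)^{2/(1+\rho)}$.

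The principal obstacle is identifying the right pair of exponents: naive alternatives fall short. Cauchy--Schwarz ($p = p' = 2$) combined with Nelson gives at best $(1/R)^{1/2 + \rho^2/(1+\rho^2)}$, which is far too weak. A more direct Gaussian tail analysis (rotating to independent coordinates $u = (g_1+g_2)/\sqrt{2(1+\rho)}$, $v = (g_1-g_2)/\sqrt{2(1-\rho)}$ and bounding $\Pr[\max(g_1,g_2) \leq t] \leq \Phi(t\sqrt{2/(1+\rho)})$ via Mill's ratio) produces the correct leading exponent $2/(1+\rho)$ but introduces extra $(\log R)^{O(1)}$ factors that would force additional restrictions on $\rho$ and $R$. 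The hypercontractive route is elegant precisely because $p = 1+\rho$ is the exponent at which Nelson's inequality saturates, and at which both $\|f\|_p$ and the conjugate-norm bound each contribute exactly $(1/R)^{1/(1+\rho)}$, giving the claimed exponent with no loss.
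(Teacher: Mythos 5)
The paper does not actually prove this lemma — it is stated as a citation (``Corollary 3 of \cite{KhotKMO07}'') with no argument supplied — so there is no in-paper proof to compare against. That said, your proposed proof is correct. You verify the trivial right-hand inequality by rewriting $1 + \frac{1-\rho}{1+\rho} = \frac{2}{1+\rho}$ and reducing to $\rho^2 \geq 0$. For the main inequality, the identification $\Gamma_\rho(1/R, 1/R) = \E_g\left[f(g)\,(T_\rho f)(g)\right]$ with $f = \ind_{(-\infty, \Phi^{-1}(1/R)]}$ is standard, H\"older with the conjugate pair $p = 1+\rho$, $p' = (1+\rho)/\rho$ is valid (indeed $\frac{1}{1+\rho} + \frac{\rho}{1+\rho} = 1$), and the Nelson condition $\rho^2 \leq (p-1)/(p'-1)$ is met with equality since $p-1 = \rho$ and $p'-1 = 1/\rho$. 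Chaining gives $\Gamma_\rho(1/R,1/R) \leq \|f\|_p^2 = (1/R)^{2/(1+\rho)}$, matching the stated exponent.

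Two small remarks. First, your argument in fact proves the bound for all $\rho \in (0,1)$ and all $R \geq 1$; the restriction $\rho < 1/20$ inherited from the cited corollary is not needed for the hypercontractive route, so your proof is strictly more general than what the lemma requires. Second, your observation that the Cauchy--Schwarz plus hypercontractivity attempt and the direct Mills-ratio computation both lose something is accurate; $p = 1+\rho$ is precisely the exponent at which the two factors $\|f\|_p$ and $\|T_\rho f\|_{p'}$ balance, which is why this particular choice yields the clean exponent $2/(1+\rho)$ with no polylogarithmic slack.
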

Fix $\eps = 1/R^3$ to get $d$ and $\tau$ from~\Cref{thm:invariance}.
Then, for any $F : [R]^L \to [R]$ that is balanced (i.e., $|F^{-1}(i)| = R^{L-1}$ for every $i \in [R]$) and has the maximum degree-$d$ influence at most $\tau$, the probability that the dictatorship test accepts is 
\[
\sum_{(i, j) \in P}
(\E_{(x, y) \sim \mu^{\otimes L}} [F_i(x) F_j(y)] + \eps)
\leq tR \cdot (\Gamma_{\rho}(1/R, 1/R) + \eps)
\leq t \cdot (1/R)^{1 - 2\rho} + 1/R. 
\]
Recalling $\rho = O(1 / \sqrt{t})$ and setting $t = \Theta(\log^2 R)$ ensure that $t (1/R)^{1-2\rho} \leq O(\log^2  R / R)$, so the soundness is at most $O(\log^2 R / R)$.
\end{proof}

\section{Approximation Algorithm}
\label{sec:approx-algo}

In this section, we give a $(\frac{d+1}{2})$-approximation full algorithm for any $d$-bounded-degree 2-CSP. Before we proceed to the algorithm, let us note that in the case where the instance is fully satisfiable, there is a simple algorithm: Just take any spanning forest of the constraint graph and then use a dynamic programming algorithm to find an assignment that satisfies all the edges in the spanning forest! This algorithm does not work in the general case since it is possible that this spanning forest has a small value. To overcome this, below we sample the spanning forest from an appropriate distribution, allowing us to maintain the same approximation ratio. 

\begin{theorem}
There is a polynomial-time $(\frac{d+1}{2})$-approximation algorithm for every 
$d$-bounded-degree $2$-CSP.
\end{theorem}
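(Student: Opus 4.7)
The plan is to realize the strategy sketched just before the theorem statement by sampling a spanning forest of the constraint graph from a carefully chosen distribution and then running the standard tree dynamic program on it. Let the constraint graph be $G = (V, E)$. First, draw a uniformly random permutation $\pi$ of $V$. For each vertex $v$ that has at least one neighbor with smaller $\pi$-rank, include in the forest $F$ the single ``parent'' edge $(v, p(v))$, where $p(v) := \argmin_{u \in N(v),\, \pi(u) < \pi(v)} \pi(u)$. Each non-root vertex contributes exactly one edge pointing to a lower-ranked vertex, so $F$ is acyclic and hence a forest. Since a 2-CSP whose constraint graph is a forest can be solved optimally by a standard bottom-up DP in time polynomial in $|V|$ and the alphabet size, we can compute an assignment $\psi_{\mathrm{DP}}$ that maximizes the number of $F$-edges it satisfies.

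The crucial calculation is a per-edge lower bound on the probability of being in $F$. Fix $e = (u, v) \in E$ with endpoint degrees $d_u, d_v \le d$, and observe that the event $\{e \in F\}$ is the disjoint union of two symmetric events according to the relative order of $\pi(u)$ and $\pi(v)$. If $\pi(u) < \pi(v)$, then $e \in F$ iff $u$ is $v$'s parent, which is exactly the event that $u$ attains the minimum $\pi$-value within $N(v) \cup \{v\}$; by symmetry of a uniform random permutation this occurs with probability $\frac{1}{d_v + 1}$. The symmetric case contributes $\frac{1}{d_u + 1}$, giving
\begin{align*}
\Pr[\,e \in F\,] \;=\; \frac{1}{d_u + 1} + \frac{1}{d_v + 1} \;\ge\; \frac{2}{d + 1}.
\end{align*}
Now let $\psi^*$ be an optimal assignment, let $S^* \subseteq E$ denote the set of edges it satisfies, so $|S^*| = \val(\Pi) \cdot |E|$, and note that, because $\psi_{\mathrm{DP}}$ is optimal on $F$, the number of $E$-edges it satisfies is at least $|S^* \cap F|$. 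By linearity of expectation, $\E[\,|S^* \cap F|\,] \ge \frac{2}{d+1}|S^*|$, which yields an expected $\frac{d+1}{2}$-approximation.

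The main obstacle, if a deterministic polynomial-time guarantee is desired, is that the algorithm's value is a maximum over assignments of a linear functional of $\pi$, making a direct method-of-conditional-expectations derandomization slightly delicate. I would handle this either via independent repetition (boosting the expectation to a high-probability bound through a standard reverse-Markov argument, since the value lies in $[0, |E|]$) or by observing that $\Pr[e \in F]$ depends only on the $\pi$-ordering of the at most $2d+2$ vertices in $\{u,v\} \cup N(u) \cup N(v)$, so a $(2d+2)$-wise independent construction of $\pi$ suffices; one can then enumerate a polynomial-size seed space and output the best assignment found across the resulting forests.
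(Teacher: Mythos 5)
Your construction is correct and takes a genuinely different route to the same distributional fact. The paper establishes that the fractional point $x_e = \frac{2}{d+1}$ lies in the graphic matroid polytope (by verifying $x(E(S)) \le |S|-1$ for all $S$), which abstractly guarantees some distribution over forests with every edge-marginal at least $\frac{2}{d+1}$; your random-priority forest realizes such a distribution \emph{explicitly}, and your per-edge calculation $\Pr[e \in F] = \frac{1}{d_u+1} + \frac{1}{d_v+1} \ge \frac{2}{d+1}$ is clean, elementary, and correct (the two cases are disjoint, and in each case the parent condition is exactly ``this endpoint is the $\pi$-minimum of the other endpoint's closed neighborhood''). After that, both proofs run the same tree DP and finish with linearity of expectation. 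Where your write-up is weaker is the derandomization step. The paper gets it essentially for free: by integrality of the graphic matroid polytope and Carath\'eodory, $x$ decomposes in polynomial time into a convex combination of at most $|E|+1$ forests, and one deterministically enumerates that support. Your two proposed workarounds do not fully close this: repetition plus reverse-Markov only yields a high-probability $(1+o(1))\frac{d+1}{2}$-approximation rather than a deterministic $\frac{d+1}{2}$ one, and a $(2d+2)$-wise independent permutation family has support of size $n^{\Theta(d)}$, so enumerating it is polynomial only for constant $d$, whereas the paper's algorithm is polynomial uniformly in $d$. The cleanest patch is available to you, though: the marginal vector of your random-priority distribution itself lies in the graphic matroid polytope, so you can invoke the same integrality/Carath\'eodory decomposition and inherit the deterministic guarantee for all $d$.
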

\begin{proof}
Let $\Pi = (G = (V, E), (\Sigma_v)_{v \in V}, (R_e)_{e \in E})$ be an instance of $2$-CSP where the maximum degree of $G$ is at most $d$. 

Let $x \in \R^E$ be such that $x_e = 2/(d+1)$ for every $e \in E$. We claim that $x$ is inside the graphic matroid polytope induced by $G$; for any $S \subseteq V$, if $|S| \in [2, d+1]$, $x(E(S)) \leq \frac{|S|(|S|-1)}{2}(\frac{2}{d+1}) \leq |S| - 1$, and if $|S| > d + 1$, $x(E(S)) \leq \frac{d|S|}{2}\cdot(\frac{2}{d+1}) \leq |S|-1$.

Therefore, there exists a distribution $\mathcal{T}$ of forests such that for a random forest $T \sim \mathcal{T}$, for every $e \in E$, $\Pr[e \in T] \geq \frac{2}{d + 1}$. Then, using dynamic programming, one can optimally solve the subinstance of $\Pi$ induced by $T$; for each connected component $T'$ of $T$ (which is a tree), root it at an arbitrary vertex, and for each node $v \in T'$ and $\sigma \in \Sigma_v$, let $A(v, \sigma)$ be the the optimal value of the CSP induced by the subtree of $T'$ rooted at $v$ when the variable $v$ is assigned label $\sigma$. One can compute $A(v, \sigma)$ in a bottom-up fashion using dynamic programming.

Since $\Pr[e \in T] \geq \frac{2}{d + 1}$ for every $e \in E$, the expected optimal number of satisfied constraints of the CSP instance induced by $T$ is at least $\frac{2}{d + 1} \cdot |E| \cdot \val(\Pi)$. Therefore, returning an optimal assignment for a random $T$ yields a $\left(\frac{d+1}{2}\right)$-approximation in expectation. It can be easily derandomized since the integrality of the matroid polytope ensures that the support of $\mathcal{T}$ is polynomial-sized. 
\end{proof}

\section{Parameterized Hardness of Approximation}
\label{sec:fpt}

In this section, we briefly discussed the parameterized hardness of approximation for the Maximum Independent Set in $k$-claw-free graphs.
Recall that an algorithm is said to be fixed parameter tractable (FPT) w.r.t. to parameter $q$ if it runs in time $f(q) \cdot n^{O(1)}$ where $f$ can be any function and $n$ is the input size. We refer interested readers to \cite{CyganFKLMPPS15} for more background on the topic.

Similar to \cite{DvorakFRR23}, we let the parameter be $q = k + \indp(G)$. For this parameter, \cite{DvorakFRR23} showed (by reducing from parameterized hardness of Max 2-CSP in~\cite{ManurangsiRS21}) that, assuming the (less standard) Strongish Planted Clique Hypothesis, no FPT algorithm achieves $o(k)$-approximation. Note that this is incomparable to hardness presented in the main body of our paper (\Cref{thm:np-claw-free-main,thm:ug-claw-free-main}), as such a parameterized hardness result does not rule out e.g. $n^{O(k)}$-time algorithm. (Our main results rule out such algorithms since $k$ there are simply absolute constants.)

Meanwhile, under the (arguably more standard) Gap-ETH assumption\footnote{Gap Exponential Time Hypothesis (Gap-ETH)~\cite{Dinur16,ManurangsiR17} states that no $2^{o(n)}$-time algorithm can distinguish between a fully satisfiable 3-SAT instance and one which is not even $(1 - \eps)$-satisfiable for some constant $\eps > 0$.}, \cite{DvorakFRR23} only show (via a reduction from parameterized hardness of Max 2-CSP in~\cite{DinurM18}) that no FPT algorithm achieves $o\left(\frac{k}{2^{(\log k)^{1/2 + o(1)}}}\right)$-approximation. Our result here is an improvement of this factor to $o(k)$:
\begin{theorem} \label{thm:param-indp-claw-free}
Assuming Gap-ETH, there is no $f(k) \cdot n^{O(1)}$-time $o(k)$-approximation algorithm for Maximum Independent Set on $k$-claw-free graphs even when the maximum independent set has size at most $k$.
\end{theorem}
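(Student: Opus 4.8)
\textbf{Proof proposal for \Cref{thm:param-indp-claw-free}.}

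The plan is to mirror the structure used for \Cref{thm:np-claw-free-main}, but start from a \emph{parameterized} large-gap hardness of Max 2-CSP (parameterized by the number of variables) under Gap-ETH, rather than the NP-hardness of \cite{Chan16}. Concretely, under Gap-ETH there is a known parameterized hardness of approximating Max 2-CSP with a gap that is polynomial in the alphabet size: for every constant $\eps>0$, given a 2-CSP instance $\Pi$ with $n$ variables and alphabet size $R$, it is hard (no $f(n)\cdot |\Pi|^{O(1)}$ time algorithm) to distinguish $\val(\Pi)=1$ from $\val(\Pi)\le R^{-1+\eps}$; moreover one may take this instance to be bipartite and biregular (by the same copy-and-split argument as in the proof of \Cref{thm:ug-2csp-unbounded-deg}). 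I would cite this in the style of the \cite{DinurM18}/\cite{ManurangsiR17} line of work on parameterized inapproximability of Max 2-CSP. The key point, exactly as in the technical overview, is that the $R^{1-\eps}$ gap is what lets subsampling produce an $\Omega(1)/d$ gap: a union bound over $R^n$ assignments costs only $R^{O(n)}$, and this is dominated by the concentration bound as long as the soundness $\gamma$ of the starting instance is much smaller than the target soundness $\gamma'$.

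The steps, in order: (i) Invoke the Gap-ETH-based parameterized hardness above to get, for a parameter $\eps$ chosen later, a biregular bipartite 2-CSP $\Pi$ on $n$ variables, alphabet size $R$, with $\val(\Pi)\ge 1-\zeta$ in the Yes case and $\val(\Pi)\le R^{-(1-\eps)}$ in the No case; here $n$ is the parameter and $R$ may be taken a growing-but-controlled function of $n$ (e.g.\ $R$ a large constant power, or $R=n$). (ii) Apply \Cref{lem:copy} to adjust degrees, then \Cref{thm:subsampling-main} with $d_A=d_B=d$, $t=1$, $\nu=1-\delta$, $\delta=0.01\eps$, to get a $d$-bounded-degree 2-CSP $\Pi''$ on the same variable set (so still $n$ variables, hence the parameter is preserved up to the alphabet lift) with $\val(\Pi'')\ge 1-0.02\eps$ in the Yes case and, in the No case, $\val(\Pi'')\le \tfrac{1}{1-2\delta}\cdot\tfrac{2}{d}$. (iii) Apply \Cref{lem:red-fglss} with $d_A=d_B=d$ and $k=2d$ to get a $k$-claw-free graph $G^*$ with $\indp(G^*)=\val(\Pi'')\cdot|E''|$; the Yes/No gap in $\indp(G^*)$ is then $\Omega(k)$, i.e.\ a $k\cdot(1/4-\eps)$-style factor after setting $d=\lfloor k/2\rfloor$ — in fact for the parameterized statement we only need an $o(k)$ lower bound, so any unbounded-in-$k$ constant factor suffices and the exact constant is irrelevant. (iv) Track the two parameters: the running-time lower bound transfers because all reductions are polynomial and the variable count (and hence $n$) is preserved, while the size of the maximum independent set in $G^*$ in the Yes case is $|E''|$, which we must relate to $k$.

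The main obstacle is step (iv): \Cref{thm:param-indp-claw-free} demands that the \emph{maximum independent set has size at most $k$}, i.e.\ the parameter of the produced instance must be $O(k)$, a constant, not $|E''|$ or $n$. The FGLSS graph as written has $\indp(G^*)=\Theta(|E''|)$, which is polynomial in $n$ — far too large. To fix this I would exploit that, in a biregular degree-reduced instance, the constraint graph decomposes into few (i.e.\ $O(d)$, a constant in $k$) near-perfect matchings, or more simply partition $E''$ into a constant number $m$ of subsets each of which forms a subinstance on which an independent set has size at most the number of edges in that part; but this still does not make the set small. The correct approach is instead to first \emph{reduce the number of variables/edges to a constant depending only on the gap parameter before the last step} — that is, to use the parameterized Max 2-CSP hardness in the regime where the number of \emph{constraints} is bounded by a function of the parameter $q$, so that $|E''|=f(q)$, hence $\indp(G^*)\le f(q)$, and then choose $k$ comparably. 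Concretely one takes the parameterized 2-CSP hardness where the parameter is the number of constraints $|E|$ (Gap-ETH rules out $f(|E|)\cdot|\Pi|^{O(1)}$ algorithms for Max 2-CSP with polynomial alphabet gap), runs the same chain, and observes that $|E''|\le|E'|=O(|E|)$ is a function of the parameter only; then $\indp(G^*)\le|E''|$ is a function of the parameter, and one enlarges $k$ so that $\indp(G^*)\le k$ holds in the Yes case while the No-case upper bound on $\indp(G^*)$ is still a factor $\omega(1)$ (indeed $\Omega(k)$ before the enlargement, so $\omega(1)$ after) smaller. Verifying that this parameter bookkeeping goes through — in particular that $R$ can be chosen so that the subsampling reduction's success probability and the union bound both hold while $R_0$ depends only on $\delta,\nu,t,d$ and not on $n$ — is the delicate part, but it follows the same computation as in the proof of \Cref{thm:subsampling-main}, just with $n_E$ now a function of the parameter rather than growing with the instance size.
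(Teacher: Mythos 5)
There are two genuine gaps in your proposal, and the paper's actual proof avoids both by taking a different route. First, your step (i) assumes that Gap-ETH gives parameterized hardness of Max 2-CSP (the objective $\val$, fraction of satisfied constraints) with a gap polynomial in the alphabet size, $R^{1-\eps}$. No such result is known under Gap-ETH: a gap of that strength for $\val$ in the parameterized regime is exactly what \cite{ManurangsiRS21} obtain only under the Strongish Planted Clique Hypothesis, and the weakness of the Gap-ETH-based bound of \cite{DinurM18} is precisely why \cite{DvorakFRR23} could only reach a factor $k/2^{(\log k)^{1/2+o(1)}}$ there. So the very ingredient your subsampling pipeline needs is unavailable under the stated assumption. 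Second, your fix for the parameter issue in step (iv) does not deliver the theorem. If you bound $|E''|$ by a function of the parameter and then enlarge $k$ to at least $|E''|$ so that $\indp(G^*)\le k$, the Yes/No gap relative to the new $k$ is only about $d/2$, which is $\omega(1)$ but a vanishing fraction of $k$; this rules out approximation ratios up to that gap, not all $o(k)$ ratios (your own parenthetical ``$\Omega(k)$ before the enlargement, so $\omega(1)$ after'' concedes this, and the earlier remark that ``any unbounded-in-$k$ factor suffices'' misreads what ``no $o(k)$-approximation'' requires). The obstruction is structural: in the FGLSS graph $\indp$ equals the number of satisfied constraints, so demanding simultaneously $\indp(G^*)\le k$, claw number $\approx 2d\le k$, and gap $\Omega(k)$ forces the CSP to have $O(1)$ variables, at which point it is solvable in $\poly(R)$ time and there is no hardness left. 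The FGLSS-plus-subsampling route cannot produce instances satisfying the ``maximum independent set has size at most $k$'' clause.

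The paper instead reduces from the Gap-ETH hardness of \cite{ChalermsookCKLM20} for $\cval(\Pi)$ (the maximum size of a consistent \emph{partial} assignment, i.e., parameterized Clique) on instances with $k$ variables, where an $o(k)$ inapproximability factor is already known, and replaces FGLSS by the \emph{label-extended graph}: vertices $(v,\sigma_v)$, with edges between label pairs that violate a constraint or share a variable with different labels only implicitly via the per-variable cliques. For this graph $\indp(G')=\cval(\Pi)\le k$ automatically, and since the constraint graph on $k$ variables has degree at most $k-1$, the graph is $(k+1)$-claw-free; no subsampling, degree reduction, or union-bound analysis is needed. If you want to salvage your approach, the switch from $\val$ to $\cval$ and from FGLSS to the label-extended graph is exactly the missing idea.
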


To prove this theorem, we need the following additional notations for 2-CSPs:
\begin{itemize}
\item For a 2-CSP instance $\Pi = (G = (V, E), (\Sigma_v)_{v \in V}, (R_e)_{e \in E})$, a \emph{partial assignment} is a tuple $(\psi_v)_{v \in V}$ such that $\psi_v \in \Sigma_v \cup \{\perp\}$. Its \emph{size} is defined as  $|\{v : \psi_v \neq \perp\}|$.
\item We say that a partial assignment $\psi$ is consistent if, for all $e = (u, v) \in V$ such that $\psi_u, \psi_v \ne \perp$, we have $(\psi_u, \psi_v) \in R_e$. 
\item Finally, we define $\cval(\Pi)$ to be the maximum size of any consistent partial assignment.
\end{itemize}

We will use the following hardness result\footnote{In \cite{ChalermsookCKLM20}, this is stated as the hardness of Clique, but this is exactly the same as 2-CSP with $k$ variables.}:
\begin{theorem}[\cite{ChalermsookCKLM20}] \label{thm:k-clique-param}
Assuming Gap-ETH, there is no $f(k) \cdot n^{O(1)}$-time $o(k)$-approximation algorithm for $\cval(\Pi)$ with $k$ variables.
\end{theorem}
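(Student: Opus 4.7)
The plan is to give a direct FGLSS-style reduction from the parameterized 2-CSP hardness in Theorem~\ref{thm:k-clique-param} to Maximum Independent Set on claw-free graphs. Given a 2-CSP instance $\Pi = (G = (V, E), (\Sigma_v)_{v \in V}, (R_e)_{e \in E})$ with $k = |V|$ variables, I would construct the \emph{conflict graph} $G^* = (V^*, E^*)$ whose vertex set is $V^* = \{(u, \sigma) : u \in V, \sigma \in \Sigma_u\}$ and which contains an edge between two vertices $(u, \sigma)$ and $(u', \sigma')$ whenever they are inconsistent, meaning either $u = u'$ with $\sigma \ne \sigma'$, or $(u, u') \in E$ with $(\sigma, \sigma') \notin R_{(u, u')}$.

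Next I would verify two structural properties of $G^*$. First, $\indp(G^*) = \cval(\Pi)$: an independent set in $G^*$ picks at most one $(u, \cdot)$ per variable $u$ (since same-variable vertices form a clique by construction), and the absence of cross-variable edges between such choices is precisely consistency of the corresponding partial assignment. Hence independent sets and consistent partial assignments are in size-preserving bijection. Second, $G^*$ is $(k+1)$-claw-free: fix any vertex $v = (u, \sigma)$ and consider its neighbors in $G^*$. For each fixed variable $u'$, the neighbors of $v$ of the form $(u', \cdot)$ are pairwise adjacent because they lie in the clique of variable $u'$. Hence any independent set contained in the neighborhood of $v$ contains at most one vertex per variable, giving at most $k$ pairwise non-adjacent neighbors (with variable $u$ contributing at most one vertex distinct from $v$, and each of the remaining $k-1$ variables contributing at most one). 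Thus $G^*$ contains no induced $K_{1, k+1}$.

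Combining these facts with Theorem~\ref{thm:k-clique-param} finishes the argument. Setting $k' := k + 1$, the graph $G^*$ is $k'$-claw-free and satisfies $\indp(G^*) = \cval(\Pi) \leq k < k'$, so it falls within the promised class of instances. An $f(k') \cdot |V^*|^{O(1)}$-time $o(k')$-approximation for Maximum Independent Set in this class would therefore yield an $f(k+1) \cdot \poly(n)$-time $o(k)$-approximation for $\cval(\Pi)$, contradicting Theorem~\ref{thm:k-clique-param} under Gap-ETH.

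The main obstacle, though modest, is getting the claw bound tight: one must observe that the same-variable vertices form full cliques (collapsing neighborhood contributions to at most one per variable) while carefully accounting for the contribution of $v$'s own variable, which can still provide one neighbor distinct from $v$. Everything else is a routine FGLSS-style verification, and no subsampling or degree-reduction machinery from the main body of the paper is needed here since both the claw-freeness parameter and the independent-set-size bound scale linearly with the number of CSP variables already.
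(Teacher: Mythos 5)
Your proposal does not prove the statement in question; it proves a \emph{consequence} of it. Theorem~\ref{thm:k-clique-param} asserts Gap-ETH hardness of approximating $\cval(\Pi)$ (equivalently, parameterized gap Clique / Densest $k$-Subgraph). Your argument \emph{assumes} that theorem and then builds the label-extended/FGLSS conflict graph to transfer the hardness to Maximum Independent Set on $(k+1)$-claw-free graphs. That is a correct (and essentially identical) proof of the \emph{next} result in the paper, Theorem~\ref{thm:param-indp-claw-free}, via the paper's final lemma (label-extended graph, same-variable cliques, claw bound $d+2 \le k+1$). It is circular as a proof of Theorem~\ref{thm:k-clique-param}: you invoke the very statement you were asked to establish.

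In the paper, Theorem~\ref{thm:k-clique-param} is not proved at all; it is imported verbatim from~\cite{ChalermsookCKLM20} (the footnote notes it is their parameterized hardness of Clique, restated as $2$-CSP with $k$ variables). An actual proof would require reproducing that work's Gap-ETH reduction to parameterized gap Clique (e.g. a randomized constant-size gadget / disperser-style amplification of a sparse Gap-ETH instance), which is unrelated to FGLSS and to claw-freeness. So the missing idea here is not a technical step but a mis-targeting: you need to show hardness of the $2$-CSP side, not transfer it to the graph side.
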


Our main ingredient is the following reduction, which is different than that of \cite{DvorakFRR23} and allows us to use $\cval$ instead of $\val$ for (in)approximation purposes.

\begin{lemma}
There is a polynomial-time reduction that takes in a 2-CSP instance $\Pi = (G = (V, E), (\Sigma_v)_{v \in V}, \{R_e\}_{e \in E})$ and produces a graph $G' = (V', E')$ such that $\indp(G') = \cval(\Pi)$. Moreover, if $G$ has degree at most $d$, then $G'$ is $(d + 2)$-claw-free.  
\end{lemma}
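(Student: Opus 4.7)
The plan is to use a different (and simpler) variant of the FGLSS-style reduction in which vertices of $G'$ correspond to variable--label pairs rather than to edge--satisfying-assignment pairs. Concretely, I would define
\[
V' \;=\; \{(v, \sigma) : v \in V,\ \sigma \in \Sigma_v\},
\]
and place an edge between $(v, \sigma)$ and $(v', \sigma')$ whenever they are \emph{in conflict}, i.e., either (i) $v = v'$ and $\sigma \neq \sigma'$, or (ii) $(v, v') \in E$ and $(\sigma, \sigma') \notin R_{(v, v')}$. The construction is clearly polynomial-time.

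Next I would verify $\indp(G') = \cval(\Pi)$ by exhibiting a bijection between independent sets in $G'$ and consistent partial assignments of $\Pi$. Given an independent set $I \subseteq V'$, since type-(i) edges are absent, each variable $v$ appears in at most one pair $(v, \sigma) \in I$, so $I$ defines a partial assignment $\psi$ (with $\psi_v = \sigma$ if $(v, \sigma) \in I$ and $\psi_v = \perp$ otherwise); absence of type-(ii) edges then says $\psi$ is consistent, and $|I|$ equals the size of $\psi$. The converse is identical. Hence $\indp(G') = \cval(\Pi)$.

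For the claw-freeness, I would argue by pigeonhole. Fix any vertex $(v, \sigma) \in V'$ and any $d+2$ of its neighbors in $G'$. Every such neighbor is of the form $(v', \sigma')$ with $v'$ equal to $v$ or a neighbor of $v$ in $G$; since $\deg_G(v) \leq d$, there are at most $d + 1$ possible choices of $v'$. Thus among $d+2$ neighbors, two must share their first coordinate, and by the type-(i) rule they are themselves adjacent in $G'$. So no set of $d + 2$ neighbors of $(v, \sigma)$ forms an independent set, meaning $G'$ is $(d + 2)$-claw-free.

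I do not anticipate any real obstacle: the construction is a direct ``vertex-per-variable-label'' encoding, and both the independent-set correspondence and the claw-freeness reduce to elementary counting. The only subtlety worth double-checking is that the type-(i) edges (same variable, different labels) are crucial for both directions: they make the bijection with partial assignments work, and they are precisely what the pigeonhole step uses to produce the required edge among the would-be claw leaves.
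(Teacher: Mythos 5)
Your proposal is correct and takes essentially the same approach as the paper: both constructions are the ``label-extended'' (or conflict) graph with one vertex per variable--label pair, and both prove claw-freeness by a pigeonhole argument over the at most $d+1$ variables in $N_G[v] = \{v\} \cup N_G(v)$. In fact your write-up is slightly more careful: the paper's stated edge rule only mentions your type-(ii) edges (pairs $(u,\sigma_u),(v,\sigma_v)$ with $(u,v)\in E$ and $(\sigma_u,\sigma_v)\notin R_e$), yet its proof then invokes that the labels of a single fixed vertex form a clique and that independent sets correspond to partial assignments --- both of which require exactly the type-(i) edges (same variable, different labels) that you explicitly add. So your proposal fills in a small gap in the paper's written description while reaching the identical conclusion by the identical argument.
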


\begin{proof}
Let $G'$ be the label-extended graph of $G$. Namely, $V' = \{(v, \sigma_v) \mid v \in V, \sigma_v \in \Sigma_v\}$ and there is an edge between $(u, \sigma_u)$ and $(v, \sigma_v)$ in $E'$ iff $(u, v) \in E$ and $(\sigma_u, \sigma_v) \notin R_e$. The claim $\cval(\Pi) = \indp(V')$ is obvious. To see that the graph $G'$ is $(d+2)$-claw-free, observe that any vertex $(u, \sigma_u)$ is only neighbors to $(v, \sigma_v)$ where $v \in N_G[u]$ (the closed-neighbor of $G$). However, for each fixed $v$, $\{(v, \sigma_v) \mid \sigma_v \in \Sigma_v\}$ forms a clique. Thus, the largest size of claw that is a subgraph of $G'$ is at most $|N_G[u]| \leq d + 1$.
\end{proof}

Plugging in the above lemma to~\Cref{thm:k-clique-param}, we immediately arrive at \Cref{thm:param-indp-claw-free}.

\end{document}